\IEEEoverridecommandlockouts\IEEEpubid{\makebox[\columnwidth]{978-1-7281-2700-2/19/\$31.00 2019 $\copyright$ IEEE \hfill}\hspace{\columnsep}\makebox[\columnwidth]{ }}
\renewcommand{\ALG@beginalgorithmic}{\small}
\algrenewcommand\algorithmicindent{1.4em}
\crefname{section}{\S}{\S\S}
\Crefname{section}{\S}{\S\S}
\def\firstcircle{(90:0.3cm) circle (0.4cm)}
\def\secondcircle{(210:0.3cm) circle (0.4cm)}
\def\thirdcircle{(330:0.3cm) circle (0.4cm)}
\theoremstyle{remark}
\newtheorem*{theorem}{Theorem}
\newtheorem*{definition}{Definition}
\newtheorem*{example}{Example}
\newcommand{\disjointrangesfig}[3]{%
\begin{tikzpicture}[->,>=stealth',shorten >=1pt,auto,node distance=0.7cm,semithick, initial text=]%
 \tikzstyle{every state}=[fill=cyan,draw=none,scale=1]%
 \tikzset{%
   simple node/.style={%
     draw,%
     text height=2.2ex,%
     text depth=1ex,%
     inner sep=0pt,text width=5.8ex,align=center%
   },%
   split node/.style={%
     simple node,%
     rectangle split,rectangle split horizontal,%
     draw,inner sep=0ex,rectangle split part align=base,%
   },%
 }%
 \node[split node, rectangle split parts=6] (localbuf) at (0.39, -2.7) {#3};%
 \node (j) at (-2.022+#1*0.54, -3.45) {\texttt{begin}};%
 \draw[->] (j) + (0, 0.2) -> ++ (0, .5);%
 \node (j) at (-0.955+#2*0.645, -3.45) {\texttt{end}};%
 \draw[->] (j) + (0, 0.2) -> ++ (0, .5);%
\end{tikzpicture}
}
\newcommand*\circled[2]{\tikz[baseline=(char.base)]{
            \node[shape=circle,draw,fill=#1,inner sep=1pt] (char) {#2};}}
\newcommand{\ruleone}{\circled{gray!25}{1}}
\newcommand{\ruletwo}{\circled{green!50}{2}}   
\newcommand{\rulethree}{\circled{yellow!50}{3}}
\newif\ifipprefixexeample
\begin{document}
\def\mystrut(#1,#2){\vrule height #1pt depth #2pt width 0pt}

\newacronym[longplural={packet equivalence classes}]{pec}{PEC}{packet equivalence class}
\newcommand{\ddnf}{ddNF}
\newcommand{\dagnode}{n}
\newcommand{\devicebox}{\nu}
\newcommand{\sharppec}{\#PEC}
\newcommand{\diekmann}[1]{{\small\textsc{Diekmann$\setminus\mathsf{#1}$}}}
\newcommand{\diekmannall}{{\small\textsc{Diekmann$\setminus\mathsf{A}, \ldots$}}}
\newcommand{\berkeleyip}{{\small\textsc{Berkeley-IP}}}
\newcommand{\stanfordip}{{\small\textsc{Stanford-IP}}}
\newcommand{\stanfordacl}{{\small\textsc{Stanford-ACL}}}
\newcommand{\pec}[1]{\mathsf{PEC}(#1)}

\title{A Precise and Expressive Lattice-theoretical Framework for Efficient Network Verification}

\author{
{\rm Alex Horn}\IEEEauthorrefmark{1}\thanks{$^{*}$ This work was completed at Fujitsu Laboratories of America.}\\
Apple
\and
{\rm Ali Kheradmand}\IEEEauthorrefmark{1}\\
University of Illinois at Urbana-Champaign
\and
{\rm Mukul R. Prasad}\\
Fujitsu Laboratories of America
}


\maketitle

\begin{abstract}

Network verification promises to detect errors, such as black holes and forwarding loops, by logically analyzing the control or data plane. To do so efficiently, the state-of-the-art (e.g., Veriflow) partitions packet headers with identical forwarding behavior into the same \gls*{pec}.

Recently, Yang and Lam showed how to construct the minimal set of \glspl*{pec}, called atomic predicates. Their construction uses Binary Decision Diagrams (BDDs). However, BDDs have been shown to incur significant overhead per packet header bit, performing poorly when analyzing large-scale data centers. The overhead of atomic predicates prompted \ddnf\ to devise a specialized data structure of Ternary Bit Vectors (TBV) instead.

However, TBVs are strictly less expressive than BDDs. Moreover, unlike atomic predicates, \ddnf's set of \glspl*{pec} is not minimal. We show that \ddnf's non-minimality is due to empty \glspl*{pec}. In addition, empty \glspl*{pec} are shown to trigger wrong analysis results. This reveals an inherent  tension between precision, expressiveness and performance in formal network verification.

Our paper resolves this tension through a new lattice-theoretical \gls*{pec}-construction algorithm, \sharppec, that advances the field as follows: (i)~\sharppec\ can encode more kinds of forwarding rules (e.g., ip-tables) than \ddnf~and~Veriflow, (ii)~\sharppec\ verifies a wider class of errors (e.g., shadowed rules) than \ddnf, and (iii) on a broad range of real-world datasets,~\sharppec\ is $10\times$ faster than atomic predicates. By achieving precision, expressiveness and performance, this paper answers a longstanding quest that has spanned three generations of formal network analysis techniques.
\end{abstract}


\section{Introduction}
\label{sec:intro}


In complex networks, misconfigurations continue to be common~\cite{SHSKRS2012,crystalnet}, causing costly unscheduled outages or compromising security~\cite{W2004,HA2006,W2010}. This has generated significant interest in formally analyzing network behavior on the control (e.g.,~\cite{FFPWGMM2015,GVAM2016,era,BGMW2017}) or data plane (e.g.,~\cite{MKACGK2011,KVM2012,KCZVMcKW2013,Z2014,HKP2017,Backesetal2019}), a class of formal methods collectively known as \emph{network verification}. In this paper, we provide a new algorithm and data structure that can serve as a foundation for both forms of network verification.


What make network verification interesting is its predictive power: it promises to find network-related errors that traditional diagnostic tools, such as ping and trace\-route, in general cannot. To accomplish this feat, network verification creates a mathematical model of the network to logically analyze the packet forwarding behavior of packets, rather than merely observing network traffic. This is an inherently difficult task: even reachability checking in the data plane is NP-hard~\cite{MKACGK2011}. Much research therefore has gone into making formal network analysis as efficient as possible.

%
%


\emph{First-generation} formal network analysis tools (e.g.,~\cite{XZMZGHR2005,JS2009,NBDFK2010,AA2010,MKACGK2011,SSYPG2012,ZMMN2012,JBOK2014,BBGIKSSV2014,MCD2015}) rely on \emph{SAT/SMT solvers}, highly optimized backtracking decision procedures for solving propositional or first-order logic problems. However, SAT/SMT solvers are too slow to enumerate all witnesses of each network property violation~\cite{LBGJV2015}, and SAT/SMT solvers tend to perform poorly on reachability queries over many distinct network paths~\cite{Backesetal2019}.

This bottleneck prompted \emph{second-generation} formal network analysis techniques to use a geometric model for packet classification instead, notably in the form of \emph{Header Space Analysis} (HSA)~\cite{KVM2012,KCZVMcKW2013,HHAGJ2014}. At its core, HSA represents packet headers as the difference of cubes in a multi-dimensional hyperspace. While compact, a significant drawback of HSA's difference of hypercube representation is that it is computationally expensive to evaluate in general. This explains why HSA uses a lazy evaluation strategy, which still has performance problems (akin to lazy functional languages).

By contrast, \emph{third-generation} formal network analysis tools avoid the problems of lazy evaluation by \emph{pre-computing} a family of disjoint sets of packet headers. We call these \emph{packet equivalence classes} (\glspl*{pec}). Intuitively, each \gls*{pec} contains packet headers that experience the same forwarding behavior through the network at each router---a form of lossless compression that has been shown to make formal network analysis more efficient in both time and space~\cite{YL2013}.\footnote{While, in the worst case, the number of generated \glspl*{pec} is exponential in the number of match conditions, in practice there are only relatively few \glspl*{pec}~\cite{YL2013,BJMSV2016}. In fact, in restricted, but not uncommon cases, the number of \glspl*{pec} is even linear in the number of match conditions~\cite{McG2012,HKP2017}.}

Formal network analysis tools based on \glspl*{pec} include Veri\-flow~\cite{KZZCG2013}, APV~\cite{YL2013}, \ddnf~\cite{BJMSV2016} and Delta-net~\cite{HKP2017}, all of which detect a myriad of network errors---such as black holes, forwarding loops, reachability and isolation violations---and \glspl*{pec} help to do so in a vendor-agnostic manner. In this paper, we focus on Veriflow~\cite{KZZCG2013}, APV~\cite{YL2013} and \ddnf~\cite{BJMSV2016}. These tools can encode match conditions with possibly many packet header fields, so-called \emph{multi-dimensional match conditions}.

However, reasoning about multi-dimensional match conditions in a priority-ordered list (such as a forwarding table) is challenging, because a higher-priority rule $x$ may overlap with a lower-priority rule $y$. Such overlapping amounts to logical negation (i.e., $y \wedge \neg x$), because $x$ needs to be subtracted from $y$. The crux of the problem is that logical negations can lead to an exponential number of case splits. Consider some packet header filter that uses the match condition $\texttt{1$\ast$1$\ast$0}$, an instance of a \emph{Ternary Bit Vector} (TBV) where `$\ast$' matches either `1' or `0'. The number of case splits due to TBV-negation, such as $\neg(\texttt{1$\ast$1$\ast$0})$, is generally exponential in the length of the TBV.

\emph{Binary Decision Diagrams} (BDDs)~\cite{B1986,K2009} can efficiently represent such case splits, and APV~\cite{YL2013} uses BDDs to compactly represent the space of packet headers, including their negation. By constructing BDDs, APV produces also canonical and optimal \glspl*{pec}, called \emph{atomic predicates}, which form the unique and smallest partition of packet headers~\cite{YL2013}.

But there is a catch: BDDs incur significant overhead per bit in each packet header field, a performance bottleneck in real-world network analysis~\cite{BJMSV2016}. This prompted \ddnf\ to not use BDDs. Instead, \ddnf\ constructs \glspl*{pec} by only intersecting TBVs. The intersection of TBVs is very efficient due to their compact representation in memory, and experiments using Azure data center snapshots confirm that \ddnf\ is significantly more efficient than APV, a remarkable achievement.

However, both \ddnf's TBVs as well as Veriflow's multi-dimensional trie data structure have inherent limitations (\cref{subsec:expressiveness}): they cannot efficiently represent match conditions over arbitrary sets and ranges of ports, and their complements. Consequently, \ddnf\ and Veriflow cannot analyze common firewall rules in practice (\cref{sec:evaluation}), such as iptables rule-sets~\cite{DMHC2016}.

Furthermore, \ddnf\ and Veriflow's \glspl*{pec} are not minimal (\cref{subsec:precision}). In the case of \ddnf, we show that this non-minimality can lead to wrong analysis results, e.g., \ddnf\ is unsuitable for detecting shadowed rules. We catalog over forty cases of such imprecision (\cref{subsec:case-study}). This motivates the following question:

\begin{quote}
\textit{Can the construction of precise and expressive packet equivalences classes be also efficient?}
\end{quote}

Our paper answers this question in the affirmative through a new lattice-theoretical \gls*{pec}-construction algorithm (\cref{sec:algorithms-and-data-structures}), \sharppec, that combines the precision and expressiveness of atomic predicates with the scalability of \ddnf. \sharppec\ is more expressive than Veriflow and \ddnf, because its encoding is not tied to TBVs. As a result, for instance, \sharppec\ can check match conditions with arbitrary ranges, e.g., iptables rule-sets. Moreover, \sharppec\ can detect errors, such as shadowed rules, that \ddnf\ cannot in general, since its analysis is imprecise.



We show that \ddnf's imprecision is due to \glspl*{pec} that are empty. We detect such empty \glspl*{pec}---a coNP-hard problem---by efficiently counting the packet headers in each \gls*{pec}. This way \sharppec\ achieves full precision, and it does so $10-100\times$ faster than SAT/SMT and BDD-based solutions that encode the \gls*{pec}-emptiness problem into propositional logic.
Moreover, by detecting empty \glspl*{pec}, \sharppec\ constructs \glspl*{pec} that are unique and minimal (\Cref{theorem:optimality}), achieving the optimality of atomic predicates, but at least $10\times$ faster than APV (\cref{sec:evaluation}).

To avoid the aforementioned limitations of TBVs and multi-dimensional trie data structures, we organize packet headers in a \emph{meet-semilattice}~\cite{DP2002} (\cref{subsec:data-structures}). Through this lattice-theoretical framework, \sharppec\ can formally analyze a strictly broader class of forwarding filters than \ddnf~and~Veriflow.

By achieving precision, expressiveness and performance, we answer a longstanding quest that has spanned three generations of formal network analysis techniques.

\section{Background and Motivation}
\label{sec:motivation}

\begin{figure}[t]
\centering
  \begin{tabular}{|c|r|r|r|r|} \hline
      {}                           &
      \multicolumn{1}{c|}{{\small\textsc{Source}}}      &
      \multicolumn{1}{c|}{{\small\textsc{Destination}}} &
      \multicolumn{1}{c|}{{\small\textsc{Proto}}}    &
      \multicolumn{1}{c|}{{\small\textsc{Action}}} \\ \hline

       \ruleone & {\small\texttt{0.0.0.4/30}}    & {\small\texttt{0.0.0.0/28}} & {\small\texttt{!UDP}}  & {\small\texttt{DROP}}         \\
     \ruletwo & {\small\texttt{0.0.0.0/29}}  &  {\small\texttt{0.0.0.12/30}}  &  {\small\texttt{UDP}}  & {\small\texttt{DROP}}           \\ 
            \rulethree & {\small\texttt{0.0.0.4/30}}    & {\small\texttt{0.0.0.12/30}}  & {\small\texttt{ANY}}    & {\small\texttt{FORWARD}}          \\ \hline
  \end{tabular}
  \caption{Forwarding table (using priorities) with 3-dimensional match conditions that neither Veriflow nor \ddnf\ can analyze}\label{fig:motivation-table}
\end{figure}

We start by giving background on formal network analysis (\cref{subsec:background}), illustrating why achieving expressiveness (\cref{subsec:expressiveness}) and precision (\cref{subsec:precision}) at the same time is challenging.

\subsection{Background: Formal Network Analysis}
\label{subsec:background}

In this subsection, we explain through illustrations what makes multi-dimensional match conditions challenging to formally analyze. Readers familiar with \gls*{pec}-based formal network analysis may wish to skip this subsection for now.

Consider two physically connected routers $\devicebox_1$ and $\devicebox_2$. The network operator wants to check the absence of  forwarding loops between $\devicebox_1$ and $\devicebox_2$. 
Assume that $\devicebox_2$ forwards packets to $\devicebox_1$ according to the forwarding table in~\Cref{fig:motivation-table}.
This forwarding table has three priority-ordered rules: \ruleone, \ruletwo\ and \rulethree, where \ruleone\ has highest priority. Since the match conditions of \ruleone, \ruletwo\ and \rulethree\ filter packets based on three packet header fields, they are instances of 3-dimensional match conditions.

\begin{figure}[b]
  \centering
    \begin{tikzpicture}[>=stealth',
    shorten > = 1pt,
node distance = 0cm and 2.7cm,
    el/.style = {inner sep=2pt, align=left, sloped},
every label/.append style = {font=\tiny}
                    ]
    \node (n0) {\includegraphics[scale=0.6]{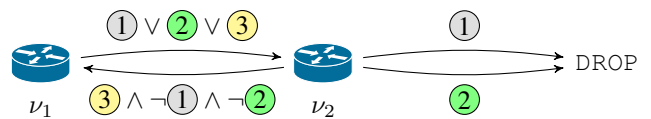}};
    \node[below=of n0] (n0desc) {$\devicebox_1$};
    \node[right=of n0] (n1) {\includegraphics[scale=0.6]{router.eps}};
    \node[right=of n1] (n2) {\texttt{DROP}};
    \node[below=of n1] (n1desc) {$\devicebox_2$};
    \path [->] (n0) edge [bend left=7] node[above] {$\ruleone \vee \ruletwo \vee \rulethree$} (n1)
               (n1) edge [bend left=7] node[below] {$\rulethree \wedge \neg\ruleone \wedge \neg\ruletwo$} (n0)
               (n1) edge [bend left=7] node[above] {\ruleone} (n2)
               (n1) edge [bend right=7] node[below] {\ruletwo} (n2);
  \end{tikzpicture}
    \caption{Verification: is there a forwarding loop, or not?}\label{fig:motivation-graph}
\end{figure}

Consider the lowest-priority rule \rulethree\ in~\Cref{fig:motivation-table}. It is not difficult to see that the set of packets matched by \rulethree\ correspond to the logic formula $\rulethree \wedge \neg\ruleone \wedge \neg\ruletwo$. This formula says that \rulethree\ matches only packet headers that are not matched by \ruleone\ or \ruletwo, thereby encoding the fact that both \ruleone\ and \ruletwo\ have a higher priority than \rulethree.

To understand the significance of such logic formulas, assume that $\devicebox_1$ forwards to $\devicebox_2$ all packets matched by either \ruleone, \ruletwo\ or \rulethree, i.e., $\ruleone \vee \ruletwo \vee \rulethree$.
Abstractly, formal analysis tools essentially reason about the forwarding behavior of a network in terms of a directed graph whose edges are annotated by such logic formulas (or \glspl*{pec} as we shall see), as illustrated in~\Cref{fig:motivation-graph}. The existence of a forwarding loop between $\devicebox_1$ and $\devicebox_2$ depends on whether the logic formula $\phi = (\ruleone \vee \ruletwo \vee \rulethree) \wedge (\rulethree \wedge \neg\ruleone \wedge \neg\ruletwo)$ is satisfiable or not; equivalently, does there exist a packet header such that  formula $\phi$ can evaluate to true?

The challenge for \gls*{pec}-based formal analysis tools is to be able to express complex multi-dimensional match conditions, while also being able to efficiently and precisely solve the resulting constraint systems via \glspl*{pec}. Unlike SAT/SMT solvers, \glspl*{pec} give by default the set of all such solutions (if any).

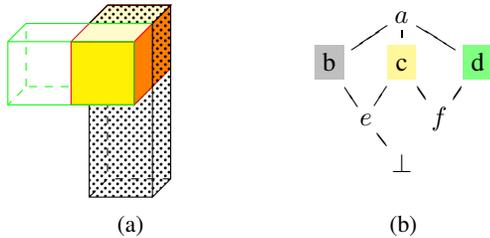
\begin{figure}[t]
\centering
  \begin{subfigure}[b]{0.37\columnwidth}
\begin{tikzpicture}[scale=0.21]
\pgfmathsetmacro{\cubex}{4}
\pgfmathsetmacro{\cubey}{4}
\pgfmathsetmacro{\cubez}{6}
      \draw[green,style=solid] (0,0,0) -- ++(0,0,-3) -- ++ (0,-4,0) -- ++ (0,0,3);
      \draw[green,style=dashed] (-5.3,-4,-3) -- ++(-2.7,0,0) -- ++ (0,0,3);
      \draw[green,style=dashed] (-8,-4,-3) -- ++(0,4,0);
      \draw[black,style=solid,fill=gray!25] (0,0,-3) -- ++(-4,0,0) -- ++(0,0,-3) -- ++(4,0,0) -- cycle;
     \draw[black,style=dashed] (-4,0,-6) -- ++(0,-11,0) -- ++(0,0,3);
      \draw[black,style=dashed] (0,-11,-6) -- ++(-4,0,0);

      \draw[black,style=solid,fill=gray!15,pattern=crosshatch dots,pattern] (0,-11,-3) -- ++(-4,0,0) -- ++(0,5.8,0) -- ++(4,2,0);
      \draw[red,fill=yellow] (0,0,0) -- ++(-\cubex,0,0) -- ++(0,-\cubey,0) -- ++(\cubex,0,0) -- cycle;
      \draw[red,fill=orange] (0,0,0) -- ++(0,0,-\cubez) -- ++(0,-\cubey,0) -- ++(0,0,\cubez) -- cycle;
      \draw[red,fill=yellow!25] (0,0,0) -- ++(-\cubex,0,0) -- ++(0,0,-\cubez) -- ++(\cubex,0,0) -- cycle;
      \draw[solid,style=dotted] (0,-\cubey,-3) -- ++(0,\cubey,0) -- ++(-\cubex,0,0);
      \draw[solid,style=dotted,style=solid,pattern=crosshatch dots,pattern] (0,0,-3) -- ++(-\cubex,0,0) -- ++(0,0,-3) -- ++ (\cubex,0,0) -- cycle;
      \draw[black,style=solid,pattern=crosshatch dots,pattern] (0,0,-3) -- ++(0,-11,0) -- ++(0,0,-3) -- ++(0,11,0);

      \draw[green,style=solid] (0,0,0) -- ++(-8,0,0) -- ++ (0,-4,0) -- ++ (8,0,0) -- cycle;
            \draw[green,style=solid] (0,0,-3) -- ++(-8,0,0) -- ++ (0,0,3);
\end{tikzpicture}
  \caption{}\label{fig:geometry}
  \end{subfigure}%
\quad%
  \begin{subfigure}[b]{0.37\columnwidth}
    \[
      \xymatrix@C=0em@R=0.37em{
                 &              &a\ar@{-}[lld]\ar@{-}[d]\ar@{-}[rrd]    &             \\
     \colorbox{gray!50}{\mbox{b}}\ar@{-}[rd]&              &\colorbox{yellow!50}{\mbox{\vphantom{b}c}}\ar@{-}[ld]\ar@{-}[rd] &              &\colorbox{green!50}{\mbox{\vphantom{b}d}}\ar@{-}[ld] \\
                 &e\ar@{-}[rd]  &                        & f\ar@{-}[ld] &             \\
                 &              &\bot                    &              &
      }
  \]
  \caption{}\label{fig:hasse-diagram}
  \end{subfigure}%
        \caption{(\subref{fig:geometry}) Geometric view of the three 3-dimensional match conditions in~\Cref{fig:motivation-table}; (\subref{fig:hasse-diagram}) Hasse diagram of the meet-semilattice induced by these match conditions (see also \cref{sec:algorithms-and-data-structures})}
        \label{fig:explicit-pec}
\end{figure}

What does a solution to this challenge entail? To answer this, consider~\Cref{fig:geometry}, a geometric view of the three match conditions in~\Cref{fig:motivation-table}. Since there are 3-dimensional match conditions,~\Cref{fig:geometry} has three axes: the $x$-axis and $y$-axis correspond to the range of the source and destination IP addresses, respectively, whereas the $z$-axis evenly divides the space into \texttt{UDP} and non-\texttt{UDP} packets. The color of each rectangular cuboid corresponds to \ruleone, \ruletwo\ and \rulethree. The key idea behind \glspl*{pec} is to divide the whole geometric space into disjoint sub-spaces prior to the analysis.

Note that each overlapping of cuboids corresponds to an overlapping of a pair of rules. In general, however, reasoning about the intersection of higher-dimensional cuboids, as in~\Cref{fig:geometry}, is NP-hard. For example, there is no forwarding loop between $\devicebox_1$ and $\devicebox_2$, since the 3-dimensional space denoted by $\phi$ is in fact empty, an instance of an NP-hard query.

\subsection{Challenge: Expressiveness}
\label{subsec:expressiveness}

 Notice that the highest-priority rule \ruleone\ in~\Cref{fig:motivation-table} complements an individual packet header field. That is, \ruleone\ matches only \emph{non}-\texttt{UDP} packet headers whose source and destination IP address match {\texttt{0.0.0.4/30}} and {\texttt{0.0.0.0/28}}, respectively. However, the \gls*{pec}-construction schemes in Veriflow and \ddnf\ are not designed for multi-dimensional match conditions with arbitrary ranges, sets of values, or their complements (all of which can be found in iptables rule-sets~\cite{DMHC2016}).
 
Veriflow and \ddnf's limitation is due to the fact that they are tied to TBVs, where Veriflow represents TBVs as a trie data structure with nodes that can have three children for `0', `1' and `$\ast$'. The problem is that a single TBV cannot  represent match conditions such as the non-\texttt{UDP} example above. As another instance, an arbitrary range that is not an IP prefix can only be represented by multiple TBV. This renders the TBV representation of match conditions inefficient and impractical. By contrast, \sharppec\ can efficiently encode such match conditions via element types (\cref{subsec:element-types}). While APV can represent the same match conditions as \sharppec, APV's reliance on BDDs makes it at least $10\times$ slower than \sharppec\ (\Cref{subsec:performance}).


\subsection{Challenge: Precision and Minimality of PECs}
\label{subsec:precision}

For \ddnf\ to be able to analyze the network in~\Cref{fig:motivation-graph}, let us further simplify the example by replacing the the three match conditions of the rules \ruleone, \ruletwo\ and \rulethree\ with the following three IP prefixes, respectively: $x = \texttt{10.57.0.0/19}$, $y = \texttt{10.57.32.0/19}$ and $z = \texttt{10.57.0.0/18}$. This simplifies the forwarding table in~\Cref{fig:motivation-table} accordingly, where each rule now only matches packets based on longest IP prefix matching---something that \ddnf\ is designed to handle. We remark that our simplification preserves a vital characteristic of the example: reasoning about it requires only \emph{two} \glspl*{pec}, which form atomic predicates by definition (\cref{subsec:minimality}).

However, \ddnf\ constructs \emph{three} \glspl*{pec}, denoted by uppercase letters: $X$ and $Y$ that represent all IP addresses in $x = \texttt{10.57.0.0/19}$, $y = \texttt{10.57.32.0/19}$, respectively, and $Z$ for all IP addresses in $z = \texttt{10.57.0.0/18}$, \emph{except} those IP addresses in $x$ and $y$. By construction, $X$, $Y$ and $Z$ are disjoint, so $\{X, Y, Z\}$ is indeed a set of \glspl*{pec}.

\begin{figure}[t]
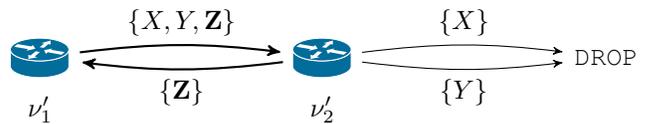

\centering
\begin{tikzpicture}[>=stealth',
    shorten > = 1pt,
node distance = 0cm and 2.7cm,
    el/.style = {inner sep=2pt, align=left, sloped},
every label/.append style = {font=\tiny}
                    ]
    \node (n0) {\includegraphics[scale=0.6]{router.eps}};
    \node[below=of n0] (n0desc) {$\devicebox'_1$};
    \node[right=of n0] (n1) {\includegraphics[scale=0.6]{router.eps}};
    \node[right=of n1] (n2) {\texttt{DROP}};
    \node[below=of n1] (n1desc) {$\devicebox'_2$};
    \path [->] (n0) edge [thick,bend left=7] node[above] {$\{X,Y,\mathbf{Z}\}$} (n1)
               (n1) edge [thick,bend left=7] node[below] {$\{\mathbf{Z}\}$} (n0)
               (n1) edge [bend left=7] node[above] {$\{X\}$} (n2)
               (n1) edge [bend right=7] node[below] {$\{Y\}$} (n2);
  \end{tikzpicture}
    \caption{Wrong result in~\ddnf, due to non-minimal \glspl*{pec}}\label{fig:false-alarm-b}
\end{figure}

The crux of the problem is that $\{X, Y, Z\}$ is not minimal, because there is a \gls*{pec} that is empty, namely $Z$. To illustrate the impact of this superfluous \gls*{pec}, consider~\Cref{fig:false-alarm-b}. Notice that each edge in~\Cref{fig:false-alarm-b} has a corresponding edge in~\Cref{fig:motivation-graph}. The problem is that \ddnf's \gls*{pec} construction fails to precisely capture the Boolean formulas along the edges in~\Cref{fig:motivation-graph}: \ddnf\ wrongly reports a forwarding loop between $\devicebox'_1$ and $\devicebox'_2$, because the edges in~\Cref{fig:false-alarm-b} labelled by $Z$ (shown in bold) form a spurious cycle. This cycle is spurious, and therefore leads to a \emph{false alarm}, because the conjunction of the corresponding Boolean formulas in~\Cref{fig:motivation-graph} is unsatisfiable.\footnote{A Boolean formula is \emph{unsatisfiable} whenever it can never evaluate to true.}

In addition to false alarms, \ddnf's limitation can also manifest itself as a \emph{failure to detect network-related errors}: in our example, \ddnf\ will not detect that the last rule in~\Cref{fig:motivation-graph} is shadowed.\footnote{There are two kinds of shadowed rules: (i)~a single higher-priority rule covers some lower-priority rule, or (ii)~the \emph{union} of several higher-priority rules covers some lower-priority rule. Here \ddnf\ fails to detect the latter.} The possibility for both \emph{false alarms} as well as \emph{false negatives} means that \ddnf\ comes with the overhead to always sanity check its results, a serious limitation (\Cref{subsec:case-study}).

Our experiments (\cref{sec:evaluation}) show that \sharppec\ is $10-100\times$ faster than alternative SAT/SMT solvers and BDD-based solutions for detecting empty \glspl*{pec}. To achieve this speedup, \sharppec\ exploits the fact that it is enough to find the number of packets in a \gls*{pec} to check the \gls*{pec}'s emptiness, rather than finding a witness for its non-emptiness. This reveals, in particular, that in the context of formal network analysis \textit{counting} is much faster than backtracking on a wide range of realistic datasets.

\section{Lattice-theoretical Framework}
\label{sec:algorithms-and-data-structures}

In this section, we highlight the technical approach behind \sharppec\ (\cref{subsec:design}), before explaining its data structures (\cref{subsec:data-structures}--\ref{subsec:pecs-as-dags}) and \gls*{pec}-construction algorithm (\cref{subsec:algorithms}). We explain how to answer queries in \sharppec\ (\cref{subsec:query}). Finally, we show that \sharppec\ constructs the minimal set of \glspl*{pec} (\cref{subsec:minimality}).


\subsection{Technical Approach}
\label{subsec:design}


To illustrate our approach, reconsider the match conditions \ruleone, \ruletwo\ and \rulethree\ in~\Cref{fig:motivation-table}. Recall that \ruleone\ complements an individual packet header field. We can represent such and other match conditions by so-called \emph{element types} (\Cref{fig:element-type}).

The geometric interpretation (\cref{subsec:background}) we considered  in~\Cref{fig:motivation-graph} was only in three dimensions. In general, the geometric view is unfeasible, since it requires reasoning about hypercubes as the number of packet header fields increases. Instead, \sharppec\ constructs a \emph{meet-semilattice}, a form of partially ordered set in which every finite subset of elements has a greatest lower bound~\cite{DP2002}. In doing so, \sharppec\ is able to represent match conditions that \ddnf\ and Veriflow cannot, while also achieving precision and performance, as described below.



\begin{figure}[b]
\centering
  \begin{tabular}{|c|rrr|} \hline
      {}                           &
      \multicolumn{1}{c|}{{\textsc{Source}}}      &
      \multicolumn{1}{c|}{{\textsc{Destination}}} &
      \multicolumn{1}{c|}{{\textsc{Protocol}}}    \\ \hline
       $a$ & {\texttt{0.0.0.0/0}}    & {\texttt{0.0.0.0/0}}   & {\texttt{ANY}}            \\
       \cellcolor{gray!25}$b$ & \cellcolor{gray!25}{\texttt{0.0.0.4/30}}    & \cellcolor{gray!25}{\texttt{0.0.0.0/28}} & \cellcolor{gray!25}{\texttt{!UDP}}            \\
      \cellcolor{yellow!50}$c$ & \cellcolor{yellow!50}{\texttt{0.0.0.4/30}}    & \cellcolor{yellow!50}{\texttt{0.0.0.12/30}}  & \cellcolor{yellow!50}{\texttt{ANY}}            \\
      \cellcolor{green!50}$d$ & \cellcolor{green!50} {\texttt{0.0.0.0/29}}  & \cellcolor{green!50} {\texttt{0.0.0.12/30}}  & \cellcolor{green!50} {\texttt{UDP}}            \\
       $e$ & {\texttt{0.0.0.4/30}}  & {\texttt{0.0.0.12/30}}  & {\texttt{!UDP}} \\
       $f$ & {\texttt{0.0.0.4/30}}  & {\texttt{0.0.0.12/30}}  & {\texttt{UDP}}    \\ \hline
  \end{tabular}
  \caption{Six semi-meetlattice elements induced by the three 3-dimensional match conditions \colorbox{gray!50}{\mystrut(6.9,1) $b$},\colorbox{yellow!50}{\mystrut(6.9,1) $c$} and \colorbox{green!50}{\mystrut(6.9,1) $d$} where \colorbox{gray!50}{\mystrut(6.9,1) $b$} features negation of a protocol header field, e.g., `\texttt{!UDP}'}\label{fig:lattice-elements}
\end{figure}

\Cref{fig:hasse-diagram} shows the Hasse diagram of the meet-semilattice produced by \sharppec, given the match conditions in~\Cref{fig:motivation-table}. A \emph{Hasse diagram} has an edge from a vertex $v$ to a vertex $u$ whenever $u$ is a subset of $v$ (written $u \subset v$), and there is no other vertex $w$ such that $u \subset w \subset v$. In other words, only non-transitive edges are included in the Hasse diagram.

\Cref{fig:lattice-elements} gives the more familiar interpretation of the elements in the meet-semilattice as match conditions. Note that the color of the rows in \Cref{fig:lattice-elements} corresponds to the coloring of the corresponding match conditions in~\Cref{fig:motivation-table}.

Observe that the meet-semilattice in~\Cref{fig:hasse-diagram} contains more elements than there are match conditions. Intuitively, the reason is that the meet-semilattice describes the overlapping of all match conditions. This intuition is made precise by the requirement that every meet-semilattice is closed under intersection: it must contain every element that is the result of intersecting sets of other elements. For example, elements $e$ and $f$ are in the meet-semilattice because $e = \colorbox{gray!50}{\mbox{b}} \cap \colorbox{yellow!50}{\mbox{\vphantom{b}c}}$ and $f= \colorbox{yellow!50}{\mbox{\vphantom{b}c}} \cap \colorbox{green!50}{\mbox{\vphantom{b}d}}$, respectively. The last two rows in \Cref{fig:lattice-elements} give a more familiar interpretation of elements $e$ and $f$.

\begin{figure}[t]
    \begin{align*}
      A&\triangleq a - (\colorbox{gray!50}{\mbox{b}} \cup \colorbox{yellow!50}{\mbox{\vphantom{b}c}} \cup \colorbox{green!50}{\mbox{\vphantom{b}d}}) &  C&\triangleq \colorbox{yellow!50}{\mbox{\vphantom{b}c}} - (e \cup f)  & E&\triangleq e    \\
      B&\triangleq \colorbox{gray!50}{\mbox{b}} - e     &   D&\triangleq \colorbox{green!50}{\mbox{\vphantom{b}d}} - f    & F&\triangleq f
      \end{align*}
  \caption{\glspl*{pec} due to the match conditions (colored rows) in~\Cref{fig:lattice-elements}, using the Hasse diagram in~\Cref{fig:hasse-diagram}}\label{fig:pecs-def}
\end{figure}

\sharppec\ bases its meet-semilattice construction on the algorithm in~\cite{KOWvdM2009}, but with a twist: we maintain the \emph{cardinality} of each \gls*{pec}---the number of packet headers in each \gls*{pec}. Crucially, an empty \gls*{pec} has cardinality zero. This way, \sharppec\ detects that $e \cup f = \colorbox{yellow!50}{\mbox{\vphantom{b}c}}$, which \ddnf\ cannot. Unlike a \emph{per bit} combinatorial backtracking search with SAT/SMT solvers, our cardinality computation uses the structure of the semilattice and harnesses the computing power of ALUs~\cite{AgnerFog} (\ref{subsec:pec-emptiness-checking-evaluation}). 

Next, we discuss the technical details behind \sharppec, specifically: its data structures for representing match conditions (\cref{subsec:element-types}) and \glspl*{pec} (\cref{subsec:pecs-as-dags}), as well as its algorithm that use these data structures to compute \glspl*{pec} (\cref{subsec:algorithms}) and answer a network operator's queries about the network (\cref{subsec:query}).



\subsection{Representation of Match Conditions via Element Types}
\label{subsec:data-structures}
\label{subsec:element-types}

\begin{figure}[b]
\centering
  \begin{tabular}{|l|l|} \hline
      \multicolumn{1}{|c|}{\textsc{Element Type}} &
      \multicolumn{1}{c|}{{\textsc{Description}}}             \\ \hline
      {\texttt{ip\_prefix}} & IP prefix, convertible to \texttt{range} \\
      {\texttt{optional<T>}} & Wildcard or a value of type {\texttt{T}} \\
      {\texttt{tbv<N>}} & Fixed-length TBV \\
      {\texttt{range}} & Half-closed interval, e.g., $[0:10)$ \\
      {\texttt{disjoint\_ranges}} & Set of disjoint ranges \\
      {\texttt{set<T>}} & Finite set of values of type {\texttt{T}} \\
      {\texttt{tuple<E$_1, \ldots, $E$_k$>}} & Tuple where {\texttt{E}$_j$} are element types \\ \hline
  \end{tabular}
 \caption{Element types to form complex (i.e., multi-dimensional) packet header match conditions}\label{fig:element-type}
\end{figure}

At its core, \sharppec\ features an abstract data type for match conditions, called \emph{element type}, which strictly generalizes the expressiveness of Veriflow and \ddnf's TBVs. For example, using element types, we encode the match conditions in~\Cref{fig:lattice-elements} as 3-tuples $\langle F_1, F_2, F_3 \rangle$ where $F_1$ and $F_2$ denote ranges of source and destination IP addresses, respectively, whereas $F_3$ denotes a set of protocols where `!' on the protocol field is encoded by efficiently complementing a bitset.

For its generalization, \sharppec\ imposes only two basic requirements on element types: elements must form a finite partially ordered set, whose cardinality must be computable in polynomial time. \Cref{fig:element-type} shows fundamental element types used in practice that satisfy these requirements where the partial ordering corresponds to the usual subset inclusion order. For example, if $x$ and $y$ are of type \texttt{ip\_prefix}, $x \subseteq y$ means that every IP address in $x$ appears also in $y$.

Each element type features three operators: equality ($=$), intersection ($\cap$), and $\mathsf{cardinality}$. All element types in~\Cref{fig:element-type} can be efficiently implemented using data structures that use contiguous memory, and our implementations have therefore high cache locality, similar to TBVs in \ddnf. We remark that since $x \subseteq y$ holds exactly if $x \cap y = x$, the subset-inclusion operator ($\subseteq$) is derived automatically, a default implementation that can be optionally optimized.

Some element types such as {\texttt{disjoint\_ranges}} and {\texttt{set<T>}} where \texttt{T} is a fixed-size type, support a complement (`!') operator. This allows for more complex match conditions, such as complements on protocol fields as in~\Cref{fig:lattice-elements}. By contrast, the \texttt{tuple} element type has no complement operator because it is computationally too expensive~\cite{KVM2012}.

More generally, by introducing element types, \sharppec\ can tightly control the use and effects of complements, allowing only forms of negation that can be efficiently implemented.

\begin{example}
The match conditions in \Cref{fig:motivation-table}, and the corresponding meet-semilattice elements in~\Cref{fig:lattice-elements}, can be represented by  3-tuples of element type {\texttt{tuple<ip\_prefix, ip\_prefix, set<protocol>>}} where {\texttt{protocol}} is an enumeration type. Alternatively, if there is no need to be able to complement the protocol header field, the last tuple component could be also replaced by {\texttt{optional<protocol>}}.
\end{example}


\subsection{\gls*{pec}-representation as a DAG}
\label{subsec:pecs-as-dags}

\sharppec\ represents the Hasse diagram of a meet-semilattice as a \emph{directed acylic graph} (DAG). Since such a Hasse diagram can be shown to be unique up to graph isomorphism~\cite{DP2002}, so is the DAG that \sharppec\ constructs using the later algorithm.

Therefore, \sharppec\ represents each \gls*{pec} by a pointer to a DAG node. Each such \gls*{pec} denotes the packet headers that are in the element associated with the pointed to DAG node, minus the elements in its children. For example, given the meet-semilattice in~\Cref{fig:hasse-diagram}, uppercase letter $A$ denotes the \gls*{pec} that includes the packet headers in $a$, excluding those in \colorbox{gray!50}{\mbox{$b$}},\colorbox{yellow!50}{\mbox{\vphantom{$b$}$c$}} and \colorbox{green!50}{\mbox{\vphantom{$b$}$d$}}. \Cref{fig:pecs-def} defines the other \glspl*{pec} similarly. By construction, all \glspl*{pec} are pair-wise disjoint. For example, the intersection of the \glspl*{pec} $B$ and $C$ in~\Cref{fig:hasse-diagram} is empty, whereas $\colorbox{gray!50}{\mbox{b}} \cap \colorbox{yellow!50}{\mbox{\vphantom{b}c}}$ is non-empty. Each node $n$ in the DAG has the following three fields: (i)~$n.\mathit{elem}$ denotes the match condition associated with $n$; (ii)~$n.\mathit{children}$ contains all the DAG nodes $c$ such that $c \not= n$ and $c.\mathit{elem} \subseteq n.\mathit{elem}$ and there is no other DAG node $c'$ such that $n \not= c' \not= c$ and $c.\mathit{elem} \subseteq c'.\mathit{elem} \subseteq n.\mathit{elem}$; (iii)~$n.\mathit{cardinality}$ corresponds to the number of packet headers in the \gls*{pec} denoted by $n$.

\begin{example}
Let $n_a$ be the root node of the DAG in~\Cref{fig:hasse-diagram} such that $n_a.\mathit{elem} = a$ and $n_a.\mathit{children}= \{n_b, n_c, n_d \}$. Note that neither $n_f$ and $n_e$ are in $n_a.\mathit{children}$, because they are not direct children of $n_a$. We shall see that \sharppec\ computes $n_c.\mathit{cardinality} = 0$, i.e., the \gls*{pec} denoted by $n_c$ is empty.
\end{example}


\subsection{Algorithm for Computing \glspl*{pec}}
\label{subsec:algorithms}

The algorithm of \sharppec\ is divided into three procedures, each of which accesses the global variable $\mathit{Modified\_Nodes}$ that determines what \gls*{pec}-cardinalities need to be re-computed. We explain each procedure in turn.


\begin{algorithm}[b]
  \caption{Insert new element into meet-semilattice}\label{alg:insert}
  \begin{algorithmic}[1]
  \Procedure{Insert}{$\mathit{elem}$}
    \State $\dagnode, \mathit{new} \gets \Call{Find\_Or\_Create\_Node}{\mathit{elem}}$
    \If{$\mathit{new}$}
      \State $\mathit{Modified\_Nodes}.\mathsf{insert}(\dagnode)$
      \State \Call{Insert\_Node}{$\mathit{Root}, \dagnode$} \Comment{$\mathit{Root}.\mathit{elem} = \top$}
      \For{$\dagnode' \in \mathit{Modified\_Nodes}$}\label{line:for-loop-modified-nodes-begin}
        \State \Call{Compute\_Cardinality}{$\dagnode'$}
      \EndFor\label{line:for-loop-modified-nodes-end}
    \EndIf
  \EndProcedure
  \end{algorithmic}
\end{algorithm}

The main procedure, \textsc{Insert} (\Cref{alg:insert}), takes as input an $\mathit{element}$---a match condition of the kind as explained in~\cref{subsec:data-structures}---that is to be added into the meet-semilattice. To do so, \textsc{Insert} calls $\Call{Find\_Or\_Create\_Node}{\mathit{element}}$ which uses a hash table (not shown) to determine when a new DAG node $\dagnode$, satisfying  $\dagnode.\mathit{elem} = \mathit{element}$, has to be created or not. Only in the former case, when $\mathit{new} = \mathbf{true}$, is $\dagnode$ inserted into $\mathit{Modified\_Nodes}$ and the subprocedures \textsc{Insert\_Node} and \textsc{Compute\_Cardinality} are called, as discussed next.

\begin{algorithm}
  \caption{Update DAG representing meet-semilattice}\label{alg:insert-node}
  \begin{algorithmic}[1]
  \Procedure{Insert\_Node}{$\mathit{parent}, \dagnode$}
    \State $\Gamma \gets \{\}$
    \For{$\mathit{child} \in \mathit{parent}.\mathit{children}$}
      \If{$\mathit{child}.\mathit{elem} \subseteq \dagnode.\mathit{elem}$}\label{line:cond-0}
        \State $\Gamma.\mathsf{insert}(\mathit{child})$
      \ElsIf{$\dagnode.\mathit{elem} \subseteq \mathit{child}.\mathit{elem}$}\label{line:cond-1}
        \State \Call{Insert\_Node}{$\mathit{child}$, $\dagnode$}
        \State \Return
      \Else\label{line:cond-2}
        \State $e' \gets \dagnode.\mathit{elem} \cap \mathit{child}.\mathit{elem}$
        \If{$e'$ is \textbf{not} empty}\label{line:intersection-begin}
          \State $\dagnode', \mathit{new} \gets \Call{Find\_Or\_Create\_Node}{\mathit{e'}}$
          \State $\Gamma.\mathsf{insert}(\dagnode')$
          \If{$\mathit{new}$}
              \State $\mathit{Modified\_Nodes}.\mathsf{insert}(\dagnode')$\label{line:modified-nodes-update-1}
              \State \Call{Insert\_Node}{$\mathit{child}$, $\dagnode'$}
          \EndIf
        \EndIf\label{line:intersection-end}
      \EndIf
    \EndFor
    \State $\mathit{parent}.\mathit{children}.\mathsf{insert}(\dagnode)$\label{line:update-0}
    \State $\mathit{Modified\_Nodes}.\mathsf{insert}(\mathit{parent})$\label{line:modified-nodes-update-0}
    \State $\mathit{max\_children} \gets $
    \State $\qquad\left\{ c \in \Gamma \mid \forall c' \in \Gamma \colon (c.\mathit{elem} \subseteq c'.\mathit{elem} \rightarrow c = c') \right\}$\label{line:max-children-init}
    \For{$\mathit{max\_child} \in \mathit{max\_children}$}\label{line:max-children-begin}
      \State $\mathit{parent}.\mathit{children}.\mathsf{erase}(\mathit{max\_child})$\label{line:update-1}
      \State $n.\mathit{children}.\mathsf{insert}(\mathit{max\_child})$\label{line:update-2}
    \EndFor\label{line:max-children-end}
  \EndProcedure
  \end{algorithmic}
\end{algorithm}

To insert a new node $n$ into the DAG that represents the Hasse diagram of the meet-semilattice, \textsc{Insert\_Node} is called on the root of the DAG. Intuitively, \textsc{Insert\_Node} (\Cref{alg:insert-node}) works by case analysis on the three possible partial orderings between a pair of nodes (\cref{line:cond-0},~\ref{line:cond-1}~and~\ref{line:cond-2}). By using $\mathit{max\_children}$, we only add a child $c$ to a parent provided that $c$'s element is maximal with respect to the other children elements in $\Gamma$ (\cref{line:max-children-init} and~\ref{line:max-children-begin}--\ref{line:max-children-end}), thereby ensuring that all children of a parent remain mutually incomparable.

The correctness of the induced updates to the DAG edges (\cref{line:update-0},~\ref{line:update-1}~and~\ref{line:update-2}) follows directly from the proof in~\cite{KOWvdM2009}, since we only augment the algorithm by maintaining what nodes have been modified along the way (see $\mathit{Modified\_Nodes}$ on \cref{line:modified-nodes-update-0}~and~\ref{line:modified-nodes-update-1}). There may be multiple such nodes when the insertion of a single new match condition requires multiple DAG nodes to be created, due to the requirement that the lattice be closed under meets, as illustrated next.

\begin{figure}[b]
\centering
\begin{subfigure}[b]{0.1\columnwidth}
  \centering
    \[
      \xymatrix@C=0.2em@R=1em{
          a\ar@{-}[d]& \\
          b\ar@{-}[d]\ar@{-}[rd]& \\
          c\ar@{-}[d]&d\ar@{-}[ldd] \\
          e\ar@{-}[d]& \\
          \bot
      }
  \]
  \caption{}
  \label{fig:second-example-a}
  \end{subfigure}
  \hfill
\begin{subfigure}[b]{0.2\columnwidth}
  \centering
    \[
      \xymatrix@C=0.2em@R=1em{
                                             &a\ar@{-}[rd]\ar@{-}[ld] &                       \\
          b\ar@{-}[d]\ar@{-}[rd]\ar@{-}[drr] &                        &\mathbf{f}\ar@{-}[d]            \\
          c\ar@{-}[rd]                       &d\ar@{-}[rd]            &\mathbf{g}\ar@{-}[d]\ar@{-}[ld] \\
                                             &e\ar@{-}[d]             &\mathbf{h}\ar@{-}[ld]           \\
                                             &\bot                    &
      }
  \]
  \caption{}
  \label{fig:second-example-b}
  \end{subfigure}
  \hfill
  \begin{subfigure}[b]{0.6\columnwidth}
  \centering
  \begin{tabular}{l|l|l}
         & {\small\textsc{Destination}}    & {\small\textsc{Proto}} \\ \hline
     $a$ & {\small\texttt{0.0.0.0/0}}      & {\small\texttt{ANY}}  \\
     $b$ & {\small\texttt{210.4.214.0/23}} & {\small\texttt{ANY}}  \\
     $c$ & {\small\texttt{210.4.214.0/24}} & {\small\texttt{ANY}}  \\
     $d$ & {\small\texttt{210.4.215.0/24}} & {\small\texttt{ANY}}  \\
     $e$ & {\small\texttt{210.4.214.0/24}} & {\small\texttt{ICMP}} \\
     $\mathbf{f}$ & {\small\textbf{\texttt{0.0.0.0/0}}}      & {\small\textbf{\texttt{ICMP}}} \\
     $\mathbf{g}$ & {\small\textbf{\texttt{210.4.214.0/23}}} & {\small\textbf{\texttt{ICMP}}} \\
     $\mathbf{h}$ & {\small\textbf{\texttt{210.4.215.0/24}}} & {\small\textbf{\texttt{ICMP}}} \\
  \end{tabular}
  \caption{}
  \label{fig:second-example-c}
  \end{subfigure}
  \caption{Two meet-semilattices (\subref{fig:second-example-a} and \subref{fig:second-example-b}) for different subsets of (\subref{fig:second-example-c}) 2-dimensional match conditions}\label{fig:reannz-false-alarm}
\end{figure}

\begin{example}
\label{ex:algorithm}
Consider the DAG in~\Cref{fig:second-example-a} whose nodes correspond to the elements $a$ through $e$ in~\Cref{fig:second-example-c}, and suppose we want to insert element $\mathbf{f}$ now. For clarity in what follows, let $n_p$ denote a DAG node that satisfies $n_p.\mathit{elem} = p$. As expected, the call $\Call{Insert}{\mathbf{f}}$ creates DAG node $n_\mathbf{f}$. However, it also creates $n_\mathbf{g}$ and $n_\mathbf{h}$ for elements $\mathbf{g}$ and $\mathbf{h}$ in~\Cref{fig:second-example-c}, respectively, since $b \cap \mathbf{f} = \mathbf{g}$ and $d \cap \mathbf{g} = \mathbf{h}$; hence, $\mathit{Modified\_Nodes} = \{n_a, n_b, n_d, n_\mathbf{f}, n_\mathbf{g}, n_\mathbf{h}\}$. The resulting DAG is shown in~\Cref{fig:second-example-b} (new elements shown in bold). \qed
\end{example}

We compute \gls*{pec}-cardinalities using the set of modified nodes: once~\Cref{alg:insert-node} returns, the computation continues on \cref{line:dependency-begin} in~\Cref{alg:compute-cardinality} where \textsc{Compute\_Cardinality} is called for every node in $\mathit{Modified\_Nodes}$ (\cref{line:for-loop-modified-nodes-begin}--\ref{line:for-loop-modified-nodes-end}), and as it does so $\mathit{Modified\_Nodes}$ shrinks after each call to \textsc{Compute\_Cardinality}, until it becomes empty.

\begin{algorithm}[t]
  \caption{Compute and/or update cardinality of \glspl*{pec}}\label{alg:compute-cardinality}
  \begin{algorithmic}[1]
  \Procedure{Compute\_Cardinality}{$\dagnode$}
    \State $\mathit{queue} \gets [\dagnode]; \mathit{visited} \gets \{\dagnode\}$\label{line:queue-init}\label{line:visited-init}
    \State $\dagnode.\mathit{cardinality} \gets \mathsf{cardinality}(\dagnode.\mathit{elem})$\label{line:base-case}
    \While{$\mathit{queue}$ is \textbf{not} empty}
       \State $\dagnode' \gets \mathit{queue}.\mathsf{dequeue}()$
       \For{$\mathit{child} \in \dagnode'.\mathit{children}$}
         \If{$\mathit{child} \not\in \mathit{visited}$}\label{line:visited-if}
           \If{$\mathit{child} \in \mathit{Modified\_Nodes}$}\label{line:dependency-begin}
             \State \Call{Compute\_Cardinality}{$\mathit{child}$}\label{line:compute-cardinality-of-child}
           \EndIf\label{line:dependency-end}
           \State $\mathit{visited}.\mathsf{insert}(\mathit{child})$
           \State $\dagnode.\mathit{cardinality} \gets$
           \State $\qquad \dagnode.\mathit{cardinality} - child.\mathit{cardinality}$\label{line:subtract}
           \State $\mathit{queue}.\mathsf{enqueue}(child)$
         \EndIf
       \EndFor
    \EndWhile
    \State $\mathit{Modified\_Nodes}.\mathsf{erase}(\dagnode)$
  \EndProcedure
  \end{algorithmic}
\end{algorithm}

The re-computation of \gls*{pec}-cardinalities works as follows. \textsc{Compute\_Cardinality} in~\Cref{alg:compute-cardinality} traverses the DAG using a queue (\cref{line:queue-init}). We initialize the \gls*{pec}-cardinality of the input DAG node $n$ by counting the packets in its associated element (\cref{line:base-case}), using the $\mathsf{cardinality}$ operator from~\cref{subsec:data-structures}, before subtracting the \gls*{pec}-cardinality of $n$'s descendants. To do so, the \gls*{pec}-cardinality of all the modified children is computed (\cref{line:compute-cardinality-of-child}) prior to updating $n$'s \gls*{pec}-cardinality (\cref{line:subtract}). Since there may be multiple paths to the same node in the DAG, \textsc{Compute\_Cardinality} uses a local variable (\cref{line:visited-init}) to ensure it does not subtract too much (\cref{line:visited-if}) as it traverses the sub-DAG rooted at \textsc{Compute\_Cardinality}'s input DAG node. By deferring the re-computation of \gls*{pec}-cardinalities for several insertions, the computation can be \emph{amortized}, if so desired. Note that the set of generated \glspl*{pec} is invariant under the insertion order of elements.

Unlike SAT/SMT solvers or BDD-based solutions---which can prove that a \glspl*{pec} is non-empty by finding a witness---\sharppec\ computes \gls*{pec}-cardinalities instead. In the worst case, this computation is quadratic in the size $N$ of the DAG where $N$ can be exponential in the number of input match conditions~\cite{BJMSV2016}.





\subsection{Answering Operator Questions via \gls*{pec}-based Queries}
\label{subsec:query}

When applying a \gls*{pec}-based formal network analysis technique to a set of packet headers decribed by a logical query, it is necessary to convert the query into a set of \glspl*{pec}. In \sharppec, we perform this conversion as follows.

Foremost, we assume that each logical query is a Boolean combination of logical predicates that have the same element type (\Cref{fig:element-type}) as the match conditions in the meet-semilattice. If a predicate in the query is not present in the meet-semilattice, we first insert it via~\Cref{alg:insert}.

Under these assumptions, a query is then converted into a set of \glspl*{pec} by invoking~\Cref{alg:query-to-pecs}, a recursive function on the logical structure of the input query. We remark that our last assumption ensures that \cref{line:find-node} in~\Cref{alg:query-to-pecs} always finds a node in the DAG, i.e., $n$ is never null. As part of our case study (\cref{subsec:case-study}), we give an example of a query conversion.

\begin{algorithm}
  \caption{Convert a query to a set of \glspl*{pec}}\label{alg:query-to-pecs}
  \begin{algorithmic}[1]
  \Function{Convert\_to\_PECs}{$\mathit{query}$}
    \If{$\mathit{query}\ \mathbf{is\ an\ element\ type}$}
      \State {$n \gets \Call{Find\_Node}{\mathit{query}}$}\label{line:find-node}
      \State \Return {$\Call{Subtree}{n}$}
    \ElsIf {$\exists\, g\,\colon\,\mathit{query} = \neg g$}
       \State $\mathit{Universe} \gets \Call{Subtree}{\mathit{Root}}$
       \State \Return {$\mathit{Universe} - \Call{Convert\_to\_PEC}{g}$}
     \ElsIf {$\exists\, g, h\,\colon\,\mathit{query} = g \wedge h$} \Comment{`$g \lor h$' is similar}
       \State $G \gets \Call{Convert\_to\_PECs}{g}$
       \State $H \gets \Call{Convert\_to\_PECs}{h}$
       \State \Return $G \cap H$ \Comment{Use `$\cup$' for `$g \lor h$'}
    \EndIf
  \EndFunction
  \end{algorithmic}
\end{algorithm}

\subsection{Minimality of \glspl*{pec}}
\label{subsec:minimality}

Next, we show that the set of non-empty \glspl*{pec} produced by \sharppec\ form atomic predicates in the following strict sense:

\begin{definition}[\textbf{Atomic Predicates~\cite{YL2013}}]
\label{def:atomic-predicates}
Let $\mathfrak{M}$ be a set of predicates, each of which represents a match condition of a firewall or forwarding rule. Then $\mathfrak{M}$'s set of atomic predicates, written $\mathsf{A}(\mathfrak{M}) = \{\alpha_1, \ldots, \alpha_k\}$, satisfies the following:
\begin{enumerate}
\item for all $i \in \{1, \ldots, k\}$, $\alpha_i \not= \mathbf{false}$;
\item $(\bigvee^k_{i=0} \alpha_i) = \mathbf{true}$;
\item $\alpha_i \wedge \alpha_j = \mathbf{false}$ for all $i, j \in \{1, \ldots, k\}$ such that $i \not= j$;
\item Each match condition $p$ in $\mathfrak{M}$, where $p \not= \mathbf{false}$, is equal to the disjunction of some subset of atomic predicates:
\begin{equation*}
p = \bigvee_{i \in S(p)} \alpha_i\ \text{where}\ S(p) \subseteq \{0, \ldots, k\};
\end{equation*}
\item $k$ is the minimal number such that the set $\mathsf{A}(\mathfrak{M}) = \{\alpha_1, \ldots, \alpha_k\}$ satisfies the above four conditions.
\end{enumerate}
\end{definition}

Yang and Lam show that the set of atomic predicates is unique~\cite{YL2013}, which they compute using BDDs. Given a set of rule match conditions that can be expressed using element types (\Cref{fig:element-type}), the next theorem shows how to compute this unique and minimal set through a fundamentally different algorithm that uses lattice theory and model counting.

\begin{theorem}[\textbf{Optimality of \sharppec}]\label{theorem:optimality}
Given as input a set of match conditions $\mathfrak{M}$ of an  element type, the set of non-empty \glspl*{pec} constructed by \sharppec\ forms atomic predicates $\mathsf{A}(\mathfrak{M})$.
\end{theorem}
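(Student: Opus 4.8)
The plan is to show that the set $\mathcal{P} = \{P_1,\dots,P_m\}$ of non-empty \glspl*{pec} produced by \sharppec\ satisfies the five defining conditions of $\mathsf{A}(\mathfrak{M})$ in \Cref{def:atomic-predicates}, and then invoke the uniqueness of atomic predicates~\cite{YL2013} to conclude. First I would recall the structural invariant maintained by Algorithms~\ref{alg:insert}--\ref{alg:compute-cardinality}: after all match conditions of $\mathfrak{M}$ have been inserted, the DAG represents the Hasse diagram of the meet-semilattice $L$ generated by $\mathfrak{M} \cup \{\top\}$ under intersection (this is exactly the invariant proved in~\cite{KOWvdM2009}, which our augmentation with $\mathit{Modified\_Nodes}$ and cardinalities does not disturb). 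For each node $n$ with element $n.\mathit{elem}$, define $\mathsf{pec}(n) = n.\mathit{elem} \setminus \bigcup_{c \in n.\mathit{children}} c.\mathit{elem}$, i.e.\ the element minus the union of its children's elements, exactly as in~\cref{subsec:pecs-as-dags}. The key arithmetic lemma I would establish first is that \textsc{Compute\_Cardinality} correctly computes $n.\mathit{cardinality} = |\mathsf{pec}(n)|$: this follows by inclusion--exclusion over the sub-DAG rooted at $n$, using that $L$ is closed under meets (so every pairwise intersection of descendants is again a node reachable below $n$) together with the $\mathit{visited}$ set ensuring each descendant is subtracted exactly once along one path. Consequently $\mathsf{pec}(n)$ is empty iff $n.\mathit{cardinality} = 0$, so discarding zero-cardinality nodes is exactly discarding empty \glspl*{pec}.

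Next I would verify the four existential conditions of \Cref{def:atomic-predicates}. Condition (1), non-emptiness, is immediate by our definition of $\mathcal{P}$ (we keep only nodes with positive cardinality). Condition (3), pairwise disjointness, I would prove by a short induction on the DAG: if $n \neq n'$, either one is a (transitive) descendant of the other — in which case $\mathsf{pec}(n')$ is subtracted out inside the ancestor — or they are incomparable, in which case their meet $n.\mathit{elem} \cap n'.\mathit{elem}$ is a strictly-smaller node whose element is removed from both via the children; either way $\mathsf{pec}(n) \cap \mathsf{pec}(n') = \emptyset$. Condition (2), that the disjunction of all \glspl*{pec} is $\mathbf{true}$: the root has $\mathit{elem} = \top$, and unfolding $\mathsf{pec}$ recursively gives $\top = \bigcup_n \mathsf{pec}(n)$ over all DAG nodes (empty ones contribute nothing), because every element of $L$ is the disjoint union of the \glspl*{pec} of the nodes at-or-below it — a statement I would prove by induction on the height of a node, using closure under meets at the inductive step. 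Specializing this to $\top$ gives condition (2); specializing it to an arbitrary match condition $p \in \mathfrak{M}$ with $p \neq \mathbf{false}$ gives condition (4), since $p$ itself is a node of $L$ and hence $p = \bigvee_{n \preceq n_p} \mathsf{pec}(n)$, a disjunction over a subset of the (non-empty) \glspl*{pec}.

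Finally, for the minimality condition (5), I would argue that $\mathcal{P}$ is the \emph{coarsest} partition refining every $p \in \mathfrak{M}$: by condition (4) each $p$ is a union of \glspl*{pec}, so $\mathcal{P}$ refines $\{p, \neg p\}$ for every $p$; conversely, any partition $\mathcal{Q}$ satisfying (1)--(4) must have each of its blocks contained in some \gls*{pec} of $\mathcal{P}$, because a $\mathcal{Q}$-block that straddled two \glspl*{pec} $\mathsf{pec}(n), \mathsf{pec}(n')$ would force, via the meet-closure of $L$ and condition (4) applied at the nodes separating $n$ from $n'$, that some generating $p$ slices that block — contradicting (4) for $\mathcal{Q}$. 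Hence $|\mathcal{Q}| \geq |\mathcal{P}|$, establishing minimality. With all five conditions verified, $\mathcal{P}$ satisfies \Cref{def:atomic-predicates}, and by the uniqueness result of Yang and Lam~\cite{YL2013} we get $\mathcal{P} = \mathsf{A}(\mathfrak{M})$.

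The main obstacle I expect is the inclusion--exclusion lemma for \textsc{Compute\_Cardinality} combined with the "every $L$-element is a disjoint union of the \glspl*{pec} below it" claim: both hinge delicately on the fact that $L$ is closed under meets but \emph{not} necessarily under joins, so one must be careful that when subtracting children's cardinalities the overlaps among children are themselves nodes in the DAG (they are, being meets) and are subtracted back the right number of times — the $\mathit{visited}$ bookkeeping in \Cref{alg:compute-cardinality} is doing real work here, and making the double-counting argument fully rigorous over an arbitrary DAG (not a tree) is the part that needs the most care. The expressiveness side (that arbitrary element types from \Cref{fig:element-type} admit the required $\cap$, $=$, and polynomial-time $\mathsf{cardinality}$ operators) is assumed by hypothesis and contributes nothing hard to this particular proof.
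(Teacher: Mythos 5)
Your overall plan is sound and lands close to the paper's, but your handling of minimality takes a genuinely different route: the paper proves condition~(5) by showing that every non-empty $\mathsf{PEC}(n)$ is literally one of the candidate atoms $X_1 \wedge \cdots \wedge X_n$ with $X_i \in \{P_i, \neg P_i\}$ and then appealing to the fact that Yang and Lam's algorithm generates exactly the non-false such conjunctions, whereas you argue directly that the set of non-empty \glspl*{pec} is the coarsest partition satisfying conditions (1)--(4). Both strategies are valid, and yours has the advantage of not depending on the internals of YL's Algorithm~3; it also proves more than the paper does about \textsc{Compute\_Cardinality} (the paper simply takes for granted that counting detects emptiness), which is a reasonable strengthening. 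Your verification of conditions (1)--(4) via ``every lattice element is the disjoint union of the \glspl*{pec} at or below it'' matches what the paper leaves implicit as ``easy to see.''

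The genuine weakness is the step you compress into a single clause: that a block straddling two distinct \glspl*{pec} ``would force, via the meet-closure of $L$ and condition (4) applied at the nodes separating $n$ from $n'$, that some generating $p$ slices that block.'' What this actually requires is the lemma that $\mathsf{PEC}(n)$ coincides with the signature cell $\bigwedge\{P_i \in \mathfrak{M} : P_i \text{ is an ancestor element of } n\} - \bigvee\{P_i \in \mathfrak{M} : \text{otherwise}\}$, so that headers in distinct non-empty \glspl*{pec} have distinct $\mathfrak{M}$-signatures and are hence separated by some input predicate. This is the entire technical content of the paper's proof, and it is not automatic: one must show (i)~that $n.\mathit{elem}$ is expressible as a conjunction of \emph{input} predicates, by repeatedly replacing intersection-generated ancestors with their own ancestors; and (ii)~that subtracting only the strict descendants of $n$ is equivalent to subtracting \emph{all} non-ancestor nodes, and that every non-ancestor is absorbed by a non-ancestor whose element lies in $\mathfrak{M}$ --- the paper's argument that a node outside $A(n)$ all of whose immediate parents lie in $A(n)$ must itself carry an input predicate, on pain of being a meet of two ancestors of $n$ and hence an ancestor. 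Without this, ``some $p$ slices the block'' is an assertion rather than a proof. The remainder of your outline (disjointness via meets, the disjoint-union decomposition, the final appeal to uniqueness of atomic predicates) is fine.
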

\begin{proof}
The proof can be found in Appendix~\ref{appendix:proof-of-minimality}.
\end{proof}

Put simply, \sharppec's output is as good as APV's~\cite{YL2013}. We re-emphasize two important assumptions: (i)~match conditions must be expressed as element types (\Cref{fig:element-type}), and (ii)~the same inputs are supplied to both tools. Condition~(ii) is violated when, say, APV is allowed to pre-process forwarding rules by aggregating match conditions that are associated with the same output port. \sharppec\ does not perform such port-aggregation pre-processing step, a design decision we made because the partitioning of the packet header space would need to be re-computed every time some port is changed.

\section{Evaluation}
\label{sec:evaluation}

\begin{figure}[b]
\centering
  \small{
  \begin{tabular}{|l|l|} \hline
    \multicolumn{1}{|c|}{\small\textsc{Dataset}}              &
    \multicolumn{1}{c|}{\small\textsc{Short Description}}                        \\ \hline
REANNZ-IP~\cite{KARW2016,Cardigan}  & 1,159 distinct IP prefixes \\ \hline
REANNZ-Full~\cite{KARW2016,Cardigan} & 1,170 OpenFlow rules \\ \hline
Azure-DC~\cite{LBGJV2015}  & 2,942 ternary 128-bit vectors \\ \hline
Berkeley-IP~\cite{RouteViews,HKP2017} & 584,944 distinct IP prefixes \\ \hline
Stanford-IP~\cite{KVM2012}& 197,828 distinct IP prefixes \\ \hline
Stanford-Full~\cite{KVM2012} & 2,732 ternary 128-bit vectors\\ \hline
Diekmann~\cite{DMHC2016} & Thousands of 8-tuples \\ \hline
  \end{tabular}%
  }
      \caption{Summary of datasets}\label{fig:datasets}
\end{figure}

For our study, we experimentally evaluate different implementations of \gls*{pec}-construction schemes (\cref{subsec:implementation}), namely: \sharppec, APV, \ddnf\ and Veriflow where possible. We evaluate both the SAT/SMT and BDD-based solutions of the \gls*{pec}-emptiness problem as well as our counting method. As part of our evaluation, we analyze firewalls and forwarding rules collected from a variety of sources (\cref{sec:datasets}). Using these datasets, we uncover real-world cases where \ddnf\ raises false alarms and misses errors (\cref{subsec:case-study}), which \sharppec\ successfully avoids. We then evaluate \sharppec's performance (\cref{subsec:performance}).

\subsection{Implementations}
\label{subsec:implementation}

Here we outline our implementations of \gls*{pec}-emptiness checks (\cref{subsec:pec-emptiness-checks}), and APV as well as \sharppec\ (\cref{subsec:re-implementation-of-apv-and-ddnf}).

\subsubsection{Three \gls*{pec}-emptiness checking procedures}
\label{subsec:pec-emptiness-checks}

In addition to implementing \sharppec's counting method, we want to evaluate the SAT/SMT and BDD-based solutions to the \gls*{pec}-emptiness problem that use propositional logic to precisely encode when a \gls*{pec} is empty. Their symbolic encoding works as follows.

Let $x$ be a match condition of a rule, and denote with $C_x$ the set of direct children of element $x$ in the DAG constructed by \sharppec\ (recall~\cref{subsec:pecs-as-dags}). By construction, every child $c$ in $C_x$ is a strict subset of $x$, i.e., $c \subset x$. We emphasize that, for efficiency reasons, we only consider the direct children of $x$, so all children in $C_x$ are mutually incomparable, i.e., for any child $c$ and $c'$ in $C_x$, neither $c \subset c'$ nor $c' \subset c$. To implement the propositional logic \gls*{pec}-emptiness solutions, we construct the following Boolean formula: $x \wedge \neg \left(\bigvee_{c \in C_x}c\right)$---equivalently, $x \wedge \neg c_1 \wedge \neg c_2 \wedge \ldots \wedge \neg c_n$ where $c_1$, $c_2$, $\ldots$, $c_n$ are $x$'s direct children in $C_x$. For checking the formula's satisfiability, we use a SAT/SMT solver or construct a BDD, as detailed next.

Our BDD implementation uses the C++ BuDDy library. We set the intial node number and cache size by manual tuning and choosing values that yield better results.  In the case of the SAT/SMT implementation, we call Z3~\cite{DB2008}. To avoid additional parsing overhead, we use Z3's C++ API to construct the Boolean formulas, rather than using the more standard SMT-LIB~\cite{SMTLIB} or DIMACS format for SAT solvers.

As part of the Boolean encoding of element types (recall~\cref{subsec:element-types}), we convert \texttt{tbv<N>} elements into \texttt{N} Boolean variables, one for each non-$\ast$ ternary bit.
For the conversion of \texttt{set<T>}, which is implemented using bitsets, we encode the disjunction of the indexes of the set bits using $\lceil \log_2 K \rceil$ Boolean variables where $K$ is the length of the bitset. For the {\texttt{tuple<E$_1, \ldots, $E$_k$>}} encoding, we designate $b = b_1 + \ldots + b_k$ Boolean variables where $b_j$ is the number of Boolean variables needed to represent \texttt{E}$_j$. The final encoding is the conjunction of the Boolean encoding of each tuple coordinate.

\subsubsection{Implementation of APV, \ddnf\ and \sharppec}
\label{subsec:re-implementation-of-apv-and-ddnf}

To rigorously evaluate the performance of our tool against others, we implement a version of APV and \sharppec\ within the same framework: we opted for Z3~\cite{DB2008}. Our re-implementation of APV applies the same optimizations as proposed in~\cite{BJMSV2016}. We do not have to re-implement \ddnf, since it is already available as an open-source module in Z3. Similar to \ddnf, our implementation of \sharppec\ leverages Z3's highly optimized TBV implementation. We implement the other element types as a C++11 library, which we describe in more detail in Appendix~\ref{appendix:element-type}.

\subsection{Datasets}
\label{sec:datasets}

Our evaluation uses 64 different datasets, extracted from five independently collected routing tables and firewalls collections~\cite{KVM2012,LBGJV2015,KARW2016,DMHC2016,HKP2017}. \Cref{fig:datasets} categorizes our datasets according to their source of origin. Each dataset is encoded as a list of rule match conditions of a specific element type (recall~\Cref{fig:element-type} in~\cref{subsec:element-types}). Since \ddnf\ only supports TBVs, we encode the match conditions in our datasets as TBVs whenever possible, which is feasible for all datasets except the `Diekmann' dataset, as described below. Irrespective of the element type, we ensure all match conditions per dataset are unique, since duplicates could be processed with almost zero cost. We describe each category of dataset in turn.
\paragraph{REANNZ} The REANNZ-Full dataset~\cite{KARW2016} contains more than a thousand OpenFlow rules, extracted from a single routing table that was used in the Cardigan deployment~\cite{Cardigan}. The OpenFlow rules in the REANNZ dataset use the following header fields: source and destination MAC addresses, ether-type, source and destination IP address, IP protocol field, and source and destination TCP ports. We convert each match condition in the rules to a ternary 216-bit vector. From the full dataset, we extract REANNZ-IP which contains only IP prefixes, but also encoded as TBVs.
\paragraph{Berkeley-IP} The Berkeley-IP dataset originates from~\cite{HKP2017} where IPv4 prefixs from the RouteViews project~\cite{RouteViews} were evaluated in the context of the UC Berkeley campus network topology. Our dataset focuses only on the IPv4 prefixes, which we encode as 32-bit long TBVs.
\paragraph{Azure-DC} The Azure-DC dataset~\cite{LBGJV2015} contains synthetic FIBs that simulate Azure-like data centers as deployed by Microsoft at that time. It contains a total of nearly 3000 match conditions, each of which is a ternary 128-bit vector.
\paragraph{Stanford} The Stanford dataset originates from Stanford's backbone network~\cite{KVM2012}, which contains configurations of sixteen Cisco routers. For each router, we generate its transfer function~\cite{KVM2012} which models the static behavior of the router (including forwarding and ACLs). We then use the match conditions in the transfer function, encoded as ternary 128-bit vectors, to produce a dataset for that router (e.g Stanford-Full/boza). To measure the effect of analyzing a network containing all sixteen routers, we also combine all sixteen datasets into a single one, Stanford-Full, which contains a total of 2,732 unique ternary 128-bit vectors. In our Stanford-IP dataset, we extract the IP prefixes directly from the raw router configurations, thereby avoiding the IP prefix compression feature in HSA's transfer functions. As a result, our Stanford-IP datasets are significantly larger than the datasets used in the evaluation of HSA~\cite{KVM2012} and \ddnf~\cite{BJMSV2016}.
\paragraph{Diekmann} The Diekmann datasets contains match conditions from real-world Linux iptables rule-sets~\cite{DMHC2016}. We parse the following packet header matching fields: source and destination IP prefix, source and destination port, protocol, connection state, input and output interface. We encode these as a mixture of TBVs and regular bitsets, which we combine into 8-tuples. We ignore wildcard characters for interfaces. We simplify each original iptables rule-set through a pre-processor that propagates match conditions along iptables chains in a depth-first manner, similar to function inlining. This essentially flattens a multi-chain iptables configuration into a list of match conditions without jumps and returns, so they conform to the same format as the other datasets.

\subsection{Case Study}
\label{subsec:case-study}

In this subsection, we describe real-world cases of imprecision in \ddnf, all of which \sharppec\ handles successfully. Due to space, we only illustrate a few examples (in our full study, we encountered over three dozen cases of imprecision in \ddnf).

To begin with, \ddnf\ misses 35 shadowed rules in the REANNZ dataset. We found that \ddnf\ misses four shadowed rules in the Stanford datasets, one in each of the `soza', `sozb', `yoza', and `yozb' Cisco routers. Furthermore, in the Stanford dataset, \ddnf\ fails to check that every packet whose destination IP address matches the IP prefix \texttt{171.64.79.160/24} is forwarded from router `yozb' to router `yoza'. For this query, \ddnf\ wrongly reports that some packets with such a destination IP address are dropped. The slightly simplified relevant rules in the dataset for the `yozb' router are as follows:

\begin{Verbatim}[fontsize=\small]
   Destination=171.64.79.160/28 => yoza
   Destination=171.64.79.176/28 => yoza
   Destination=171.64.79.128/27 => yoza
   Destination=171.64.79.192/27 => yoza
   Destination=171.64.79.224/27 => yoza
   Destination=171.64.79.0/25   => yoza
   Destination=171.64.79.0/24   => DROP
\end{Verbatim}

Here, \ddnf\ produces this wrong result, because the union of IP prefixes that forward to `yoza` equals the IP prefix of the last rule that drops packets: the match condition of the last rule, therefore, is encoded as a singleton set that contains an empty \gls*{pec}---the same underlying cause as described in~\cref{subsec:precision}.

As a more complicated example, consider the following human-readable form of the OpenFlow rules part in the REANNZ dataset (slightly simplified to help with readability), ordered from highest to lowest priority:

\begin{Verbatim}[fontsize=\small]
    Protocol=ICMP              => Controller
    Destination=210.4.214.0/24 => Port 1
    Destination=210.4.215.0/24 => Port 1
    Destination=210.4.214.0/23 => Port 2
    Destination=ANY            => DROP
\end{Verbatim}

The match conditions associated with these OpenFlow rules induce the DAG shown in~\Cref{fig:reannz-false-alarm}. Suppose a network operator wants to answer the following query:
\begin{quote}
``Are all non-\texttt{ICMP} packets destined to IP prefix \texttt{210.4.214.0/23} sent to Port 1?''
\end{quote}

Formally, this query is a Boolean combination of the form $\texttt{210.4.214.0/23},\texttt{ANY} \land \neg( \texttt{0.0.0.0/0},\texttt{ICMP})$ where the first and second conjunct are elements $b$ and $c$ in the DAG in~\Cref{fig:reannz-false-alarm}, respectively. Using~\Cref{alg:query-to-pecs} in~\cref{subsec:query}, we convert the query into the set of \glspl*{pec} $\{B,C,D\}$. Since $B$ is a \gls*{pec} associated with a rule that outputs the packet at port 2, \ddnf\ concludes that the above property is violated. However, \ddnf's verification result is incorrect: since $B$ is empty no such violation can be realized in the actual network. \sharppec\ correctly detects that the property holds. For the sake of brevity, we omit the discussion of five other, but similar, examples of imprecision in the REANNZ dataset.



\subsection{Performance Evaluation}
\label{subsec:performance}

\begin{figure*}[t]
\centering
\scalebox{0.65}{
\begin{tabular}{@{}lllllllllllllll@{}}
\toprule
Dataset            & Insertions & PECs      & \begin{tabular}[c]{@{}l@{}}Empty\\ PECs\end{tabular} & \begin{tabular}[c]{@{}l@{}}Atomic\\ Preds.\end{tabular} & \multicolumn{2}{l|}{\begin{tabular}[c]{@{}l@{}}PEC-construction \\ time (s)\end{tabular}} & \multicolumn{3}{l}{PEC-emptiness check (s)} & APV (s) & \multicolumn{4}{c}{Memory (MB)} \\ \cmidrule(lr){6-10} \cmidrule(l){12-15}
                   &            &           &                                                      &                                                         & Z3 ddNF                           & \multicolumn{1}{l|}{\#PEC}                    & BDD       & SAT       & Card.              &         & BDD   & SAT   & Card. & APV     \\ \midrule
REANNZ-IP          & 1,159      & 1,160     & 25                                                   & 1,135                                                   & \textless{}1ms                    & \textless{}1ms                                & 0.016     & 0.414     & \textless{}1ms     & 0.001   & 6     & 6     & 3     & 5       \\
REANNZ-Full        & 1,170      & 12,783    & 275                                                  & 12,508                                                  & 0.112                             & 0.009                                         & 2         & 9         & 0.018              & 3       & 14    & 26    & 9     & 10      \\
Azure-DC           & 2,942      & 5,096,869 & 10,450                                               & 5,086,419                                               & 3301                              & 121                                           & 20112     & 47829     & 30                 & 25669   & 4,429 & 5,797 & 2,365 & 2,517   \\
Berkeley-IP        & 584,944    & 584,945   & 29,813                                               & Timeout                                                 & Timeout                           & 2709                                          & 1553      & 460       & 0.515              & Timeout & 302   & 701   & 227   & Timeout \\
Stanford-IP/soza   & 184,682    & 184,682   & 4,841                                                & 179,841                                                 & 471                               & 347                                           & 7         & 82        & 0.119              & 4951    & 102   & 251   & 69    & 49      \\
Stanford-IP/yoza   & 4,746      & 4,746     & 3                                                    & 4,743                                                   & \textless{}1ms                    & \textless{}1ms                                & 0.076     & 2         & 0.002              & 2       & 8     & 9     & 4     & 6       \\
Stanford-IP/All    & 197,828    & 197,828   & 4,874                                                & 192,954                                                 & 266                               & 199                                           & 19        & 89        & 0.156              & 5149    & 122   & 265   & 85    & 53      \\
Stanford-Full/soza & 524        & 16,764    & 81                                                   & 16,683                                                  & 0.056                             & \textless{}1ms                                & 0.668     & 9         & 0.024              & 2       & 18    & 19    & 10    & 13      \\
Stanford-Full/yoza & 507        & 60,363    & 231                                                  & 60,132                                                  & 5                                 & 0.17                                          & 4         & 38        & 0.17               & 20      & 46    & 65    & 31    & 28      \\
Stanford-Full/All  & 2,732      & 1,176,095 & 48,906                                               & 1,127,189                                               & 560                               & 28                                            & 692       & 1958      & 4                  & 2314    & 895   & 1,077 & 544   & 439     \\ \midrule
Diekmann/G & 5,321      & 889,646   & 40                                                   & 889,606                                                  & - & 39                                                           & 413       & 4729      & 10                 & 2385    & 3,843  & 3,854  & 3,924  & 608  \\
Diekmann/J & 6,004      & 1,058,897 & 56                                                   & 1,058,841                                                & - & 71                                                           & 486       & 5654      & 13                 & 2936    & 4,558  & 4,573  & 4,656  & 700  \\
Diekmann/K & 3,242      & 400,911   & 257                                                  & 400,654                                                  & - & 18                                                           & 157       & 2084      & 3                  & 732     & 1,997  & 2,006  & 2,031  & 233  \\
Diekmann/P & 578        & 492,378   & 4                                                    & 492,374                                                  & - & 47                                                           & 168       & 1837      & 4                  & 635     & 1,563  & 1,573  & 1,606  & 324  \\
Diekmann/Q & 307        & 4,626     & 38                                                   & 4,588                                                    & - & 0.087                                                        & 0.763     & 17        & 0.016              & 0.94    & 21     & 29     & 18     & 7    \\
\bottomrule
\end{tabular}
}
\caption{Evaluation results for a subset of datasets. See~Appendix~\ref{appendix:evaluation} for full experimental results table.}\label{fig:evaluation-results}
\end{figure*}

 We evaluate \sharppec's performance along two dimensions, namely: (i)~time and memory usage to construct \sharppec's meet-semilattice; (ii)~time and memory usage for detecting empty PECs. We discuss our results in turn.\footnote{All experiments are run on a Linux machine with an Intel Xenon CPU ES-1660 3.30GHz and 32GB DDR3 1333MHz RAM.}
 

\subsubsection{PEC-construction}
We compare \sharppec\ to APV, and Z3's implementation of \ddnf. We ensure that every implementation benefits from the same optimizations (\Cref{subsec:re-implementation-of-apv-and-ddnf}). We find that \sharppec\ consistently outperforms APV and \ddnf\ in Z3 where, on larger datasets, the speed-up is more than $10\times$. For example, on the Azure-DC dataset, our re-implementation of \sharppec\ in Z3 is approximately $30\times$ faster than \ddnf. APV times out on the Berkeley-IP dataset after 10 hours, whereas \sharppec\ completes the PEC-construction in 45 minutes. We include in \sharppec's total run-time the time it takes to check PEC-emptiness, when comparing \sharppec\ and APV. For this comparison, we use the 39 datasets in which either APV or \sharppec\ runs for more than $100\,\mathit{ms}$, excluding the Berkeley-IP dataset where APV times out. In $95\%$ of these 39 cases, despite \sharppec's PEC-emptiness check, \sharppec\ is at least $10\times$ faster than APV, and $25\%$ of this time \sharppec's speed-up is at least $100\times$. On average, \sharppec\ is at least $80\times$ faster than APV. Finally, APV and \sharppec's memory usage averages out to be the same across these datasets. \Cref{fig:evaluation-results} shows parts our experimental results, see Appendix~\ref{appendix:evaluation} for the full details.

The fact that \sharppec\ outperforms APV is expected, since \sharppec\ eliminates the per-bit overhead of BDDs. The performance difference between \sharppec\ and Z3's implementation of \ddnf, in turn, can be explained in terms of the number of intersection and subset operations required to insert a new match condition into their respective data structure: their total run-time is proportional to these operations. For example, in the Stanford-Full dataset, \sharppec\ requires 0.4 million whereas \ddnf\ in Z3 takes 8 million such operations, a $20\times$ improvement. \sharppec's improvement over Z3's implementation of \ddnf\ are similar on the other datasets.

\subsubsection{PEC-emptiness checking}
\label{subsec:pec-emptiness-checking-evaluation}
We compare \sharppec's counting method to the SAT/SMT and BDD-based solutions to checking PEC-emptiness. We evaluate the performance of PEC-emptiness checking using the 24 datasets in which \sharppec\ runs for more than $100\,\mathit{ms}$. We perform the PEC-emptiness check after the PEC-construction has completed. We take extra precautions in our implementations to ensure a fair comparison (\cref{subsec:pec-emptiness-checks}). \Cref{fig:evaluation-results} shows that \sharppec's counting method significantly outperforms the SAT/SMT and BDD-based approaches: \sharppec\ achieves at least a $10\times$ speed-up compared to the SAT/SMT and BDD-based approach in over $95\%$ of cases. On average, \sharppec\ is at least $500\times$ and $200\times$ faster than the SAT/SMT and BDD-based approaches, respectively.

To understand why \sharppec's cardinality-based approach outperforms the SAT/SMT and the BDD-based approaches, reconsider the IP prefixes in~\Cref{subsec:precision}.
Representing $x$, $y$, and $z$ in propositional logic requires 19, 19, and 18 variable assignments respectively, corresponding to their non-wildcard bits. Just encoding $Z = z - (x \cup y)$ in SAT requires near 60 logic gates, excluding the task of checking satisfiability.
Representing the predicates using BDDs requires the same number of BDD nodes. Assuming logical BDD operations are linear in their operand size, computing $Z$ at least requires CPU cycles proportional to the cumulative size of the three BDDs.
On the other hand, the cardinality of each predicate in the example fits into a single machine word. We need only 2 arithmetic CPU operations to compute the cardinality of $Z$ (i.e $|z| - |x| - |y|)$, and then check if it is zero. While in theory there are still near 60 operations performed (at the bit level), \sharppec\ harnesses the computing power of ALUs to finish the operations in fewer CPU cycles. For example, in the Stanford-Full dataset where each node in the DAG has 3 children and 12 nodes in its subtree on average, the BDD-based approach requires $3\times128$ low-level BDD operations on average (each spanning tens of CPU instructions). By contrast, our cardinality-based approach needs at most 3 ALU operations for each subtraction. So \sharppec\ should be at least  $(3\times128) / (12\times3) \approx 10\times$ faster than the BDD-based approach, and our experiments show indeed at least a $127\times$ speed-up.


\subsubsection{Comparison with Veriflow} We compare \sharppec\ to the original implementation of Veriflow~\cite{AhmedAndBrighten}. Since that implementation of Veriflow only supports a restricted form of OpenFlow rules where arbitrary per-field bitmasks are disallowed, it cannot analyze the majority of our datasets. We therefore restrict our experiments with Veriflow to a simplified version of the Stanford-Full dataset. We use the default packet header field ordering in Veriflow. We ask Veriflow to only find `Equivalence Classes' (ECs), rather than each EC's forwarding graph. In this restricted setting, Veriflow takes $41\,s$ to create 3,778,324 ECs, using $1\,\mathit{GB}$ of memory. Despite \sharppec's support for arbitrary bitmasks, it is still more efficient than Veriflow, in both time ($30\,s$) and space ($0.5\,\mathit{GB}$): specifically, \sharppec\ constructs only 1,066,645 PECs in $27\,s$, and finds 44,418 empty PECs in 3s. 




\subsection{Discussion: Importance of  Empty \glspl*{pec}} We showed that \ddnf's non-minimality of \glspl*{pec} is due to \glspl*{pec} that are empty. In our case study (\Cref{subsec:case-study}), we exemplified real-word cases where empty \glspl*{pec} lead to wrong analysis results, which are very likely to hinder technology adoption~\cite{SAEMJ2018}. We emphasize that we only gave illustrative examples; our list is not exhaustive, and it includes cases where \ddnf\ misses errors. In practice, therefore, \ddnf\ is only as fast as the slowest decision procedure needed to sanity check its results, a fundamental limitation. By contrast, \sharppec's analysis is correct by construction (\Cref{subsec:minimality}), and its performance is \emph{not} dependent on BDDs or SAT/SMT solvers, which are orders of magnitude slower in finding empty \glspl*{pec} (\Cref{subsec:pec-emptiness-checking-evaluation}).


\section{Related Work}
\label{sec:related-work}


Similar to APV~\cite{YL2013} and \ddnf~\cite{BJMSV2016}, \sharppec\ has many potential applications in the field of network correctness. The literature in this field is vast and includes BGP configuration checking (e.g.,~\cite{PKGC2017,GW1999,FB2005,QU2005,FSR2012,Bagpipe2016,ERA2016,GVAM2016,PCKGC20}), ACL misconfiguration detection (e.g.,~\cite{BGR2011,aclabstractionrefinement2011}), firewall checking (e.g.,~\cite{YMSCCM2006,JS2009,NBDFK2010,ZMMN2012}), SDN verification (e.g.,~\cite{CVPDR2012,SSYPG2012,BBGIKSSV2014,cocoon2017}), testing (e.g.,~\cite{ZKVM2012,taste2014,chaosmonkey2015,FYTCS2016,crystalnet}), debugging (e.g.,~\cite{BZZMRW2014,sts2015}), differential analysis (e.g.,~\cite{chimp2015}), concurrency analysis (e.g~\cite{sdnracer2015,bigbug2017}), automatic repair (e.g.,~\cite{marham2016,cpr2017,fixit2017}), synthesis (e.g.~\cite{propane2016,el2017network,BDCVV2018}), programming languages (e.g.~\cite{FHFMRSW2011,SGW2014,AFGJZSW2014,kinetic2015,p4k2018}), safe network updates (e.g.,~\cite{reitblatt2012abstractions,hotswap2013,safeupdatehybridsdn2013,ezseqway2017}), data plane checking (e.g.,~\cite{XZMZGHR2005,MKACGK2011,KVM2012,LBGJV2015}), real-time checkers~\cite{KCZVMcKW2013,KZZCG2013,ZZYJJLMV2014,HKP2017}, and more general network analyses (e.g.,~\cite{AMEE2009,AA2010,secguru2014,FFPWGMM2015,era,BGMW2017}) together with suitable levels of abstractions (e.g.,~\cite{PBLRV2016,BGMW2018}).

\begin{figure}[b]
\centering
    \scalebox{0.9}{
    \small{
  \begin{tabular}{|l|>{\centering\arraybackslash}p{0.3cm}|>{\centering\arraybackslash}p{0.3cm}|>{\centering\arraybackslash}p{0.3cm}|>{\centering\arraybackslash}p{0.3cm}|>{\centering\arraybackslash}p{0.3cm}|}
    \hline
    
	{\textbf{\textsc{Feature / Characteristic}}}              & \rotatebox{90}{\sharppec} & \rotatebox{90}{\ddnf} & \rotatebox{90}{\textsc{APV\ }} & \rotatebox{90}{\textsc{Delta-net\ }}  & \rotatebox{90}{\textsc{Veriflow\ }}  \\ \hline
	Precise network analysis           & $\CIRCLE$           & $\Circle$             & $\CIRCLE$      & $\CIRCLE$   & $\CIRCLE$   \\
	Rule priority \& action invariant                  & $\CIRCLE$           & $\CIRCLE$           & $\Circle$        & $\CIRCLE$   & $\CIRCLE$   \\ \hline
	Match conditions with bit masks                 & $\CIRCLE$           & $\CIRCLE$           & $\CIRCLE$      & $\Circle$     & $\LEFTcircle$   \\
	Wildcard on packet header fields                 & $\CIRCLE$           & $\CIRCLE$           & $\CIRCLE$      & $\Circle$     & $\CIRCLE$   \\
	Match conditions with sets of values     & $\CIRCLE$           & $\Circle$             & $\CIRCLE$      & $\Circle$     & $\Circle$   \\
	Negation on packet header fields     & $\CIRCLE$           & $\Circle$             & $\CIRCLE$      & $\Circle$     & $\Circle$   \\
	Range filters beyond IP prefixes     & $\CIRCLE$           & $\Circle$             & $\CIRCLE$      & $\CIRCLE$     & $\Circle$   \\ \hline
	\gls*{pec}-cardinalities        & $\CIRCLE$           & $\Circle$             & $\Circle$        & $\Circle$     & $\Circle$   \\
	Negation-free \gls*{pec}-construction & $\CIRCLE$           & $\CIRCLE$           & $\Circle$        & $\CIRCLE$   & $\Circle$   \\
	Canonical \gls*{pec}-representation                & $\CIRCLE$           & $\CIRCLE$           & $\CIRCLE$      & $\CIRCLE$   & $\Circle$   \\
 	Minimal and unique set of \glspl*{pec}                 & $\CIRCLE$             & $\Circle$             & $\CIRCLE$      & $\Circle$     & $\Circle$     \\ \hline
  \end{tabular}
    }
    }
  \caption{Feature comparison of closest related work}\label{fig:related-work}
\end{figure}

Our work is most closely related to \ddnf~\cite{BJMSV2016}, APV~\cite{YL2013}, Delta-net~\cite{HKP2017} and~Veriflow~\cite{KZZCG2013}, since they all partition packet headers somehow. However, these formal network analysis tools also differ in important ways, as summarized by~\Cref{fig:related-work} using characteristics, which are divided into three blocks: (i)~whether the analysis is precise \glspl*{pec} remain the same when the priority or output port of rules change; (ii)~common kinds of match conditions of practical interest; and (iii)~finally, attributes of the underlying algorithms. We discuss each of these tools in turn:
\paragraph{\ddnf~\cite{BJMSV2016}}\label{sec:related-work-ddnf} \sharppec\ achieves precision when \ddnf\ cannot. Furthermore, we have shown that \sharppec\ can detect shadowed rules, whereas \ddnf\ cannot in general. As a result, \sharppec\ can verify equivalence of forwarding tables, whereas \ddnf\ cannot. We have also shown that \sharppec\ is more expressive than \ddnf\ in the kind of match conditions supported, e.g., iptables rule-sets. The DAG produced by \sharppec\ can be shown to be isomorphic to \ddnf's, but \sharppec\ is up to $30\times$ faster than \ddnf\ in constructing it (\cref{sec:evaluation}).
\paragraph{APV~\cite{YL2013}} \label{sec:related-work-apv} APV produces \glspl*{pec} in the form of atomic predicates, the smallest partition of the packet header space. \sharppec\ also constructs the fewest \glspl*{pec} (\Cref{theorem:optimality}), and it does so $10\times$ faster than APV (\cref{subsec:performance}). Through an optional pre-processing step, APV may further reduce the problem size by aggregating match conditions per output port. However, when the priority or the output port of a rule changes, so would atomic predicates for the \emph{entire} network then. By contrast, \sharppec\ and \ddnf\ only create \glspl*{pec} that are invariant under changes to the priority of rules and/or their actions. As explained in the introduction, APV's \gls*{pec}-construction algorithm is not negation-free, explaining why it relies on BDDs~\cite{B1986,K2009}, whereas neither \sharppec\ nor \ddnf\ do.

\paragraph{Delta-net~\cite{HKP2017}}\label{sec:related-work-delta-net} Delta-net is specifically designed for real-time analysis of large-scale BGP-controlled data centers~\cite{rfc7938}. It only supports forwarding rules that match packets based on ranges, possibly with arbitrary lower and upper bounds (unlike \ddnf). Due to its limited expressiveness, Delta-net achieves quasi-linear time complexity, whereas \ddnf\ and \sharppec's higher expressiveness has an exponential worst-case time complexity. In addition, unlike \ddnf\ and \sharppec, Delta-net's run-time is independent of the order in which match conditions are inserted. Delta-net exploits the fact that the negation of a range can be efficiently computed, so its \gls*{pec}-construction scheme is not negation-free.
\paragraph{Veriflow~\cite{KZZCG2013}}\label{sec:related-work-veriflow} Veriflow uses a multi-dimensional trie data structure to represent \glspl*{pec}. To do so efficiently, Veriflow imposes assumption that prevent it from analyzing most multi-dimensional match conditions in our datasets (\cref{subsec:performance}). The \glspl*{pec} constructed by Veriflow depend on the order of levels in the multi-dimensional tries, which can render its memory usage and run-time performance unpredictable.


\section{Concluding Remarks}
\label{sec:conclusion}



Our case study (\Cref{subsec:case-study}) and experiments (\cref{subsec:performance}) reveal the tension between precision, expressiveness and performance: Veriflow and \ddnf\ impose assumptions that prevent them from analyzing most of our dataset, and \ddnf's analysis is imprecise. By contrast, APV is very expressive and precise but significantly slower than Veriflow and \ddnf. Our work offers a new lattice-theoretical, algorithmic framework for formal network analysis that is expressive, precise and fast, thereby addressing a longstanding problem that has spanned three generations of formal network analysis tools.

To achieve this, we identified and efficiently solved the coNP-hard problem (Appendix~\ref{appendix:complexity}) of deciding whether a \gls*{pec} is empty or not. We showed that both SAT/SMT and BDD-based solutions to this problem perform poorly. This lead to \sharppec, which uses a model counting method that is $10-100\times$ faster than the SAT/SMT and BDD-based solutions. In addition, \sharppec\ constructs the unique minimal number of \glspl*{pec}, and it does so $10\times$ faster than APV's atomic predicates.

\bibliographystyle{IEEEtran}
\bibliography{sharppec}

\begin{appendices}

\newpage
~
\newpage

\section{Worst-case Complexity}
\label{appendix:complexity}

Here we prove results about the theoretical worst-case complexity of \sharppec's underlying model counting method.

Given a pair $\langle x,\,Y \rangle$ where $x$ is an element (recall~\cref{subsec:data-structures}) and $Y$ is a finite set of elements, define $\mathsf{PEC}{\langle x,\,Y \rangle} \triangleq x - \bigcup_{y \in Y} y$ to be the set of packet headers in $x$ that are not included in the union of elements in $Y$. Without loss of generality, we make the simplifying assumptions that all elements in $Y$ are a subset of $x$, and every element in $Y$ is maximal with respect to subset inclusion. Both conditions can be satisfied through a preprocessor that runs in polynomial time. For example, if $Y$ contains the ternary bit vectors `$101$' and `$10\ast$', we can remove the former because $101 \subseteq 10\ast$.

Define the \gls*{pec}-emptiness problem as follows: given a pair $\langle x,\,Y \rangle$ where $x$ is an element and $Y$ is a set of elements, is $\mathsf{PEC}{\langle x,\,Y \rangle}$ empty? In an analogous way, \gls*{pec}-cardinality is defined to be the number of packet headers in $\mathsf{PEC}{\langle x,\,Y \rangle}$.

\begin{figure}[b]
  \centering
    \[
      \xymatrix@C=0.2em@R=2em{
                         &          &    \top\,\colon\,\mathbf{4}  &           &                   \\
      1\ast1\ar@{-}[urr]\,\colon\,\mathbf{1} &          & 11\ast\ar@{-}[u]\,\colon\,\mathbf{0} &           & \ast10\ar@{-}[ull]\,\colon\,\mathbf{1} \\
                         &  111\ar@{-}[ul]\ar@{-}[ur]\,\colon\,\mathbf{1} & & 110\ar@{-}[ul]\ar@{-}[ur]\,\colon\,\mathbf{1} &    \\
                         &          &  \bot\ar@{-}[ul]\ar@{-}[ur]\,\colon\,\mathbf{0}  &               &
      }
  \]
  \caption{Hasse diagram and \gls*{pec}-cardinalities (in \textbf{bold}) for the DNF formula $(x_1 \wedge x_3) \vee (x_1 \wedge x_2) \vee (x_2 \wedge \neg x_3)$}
  \label{fig:dnf-hasse-diagram}
\end{figure}
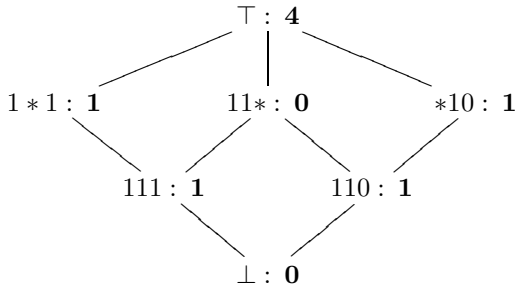

\begin{theorem}[\gls*{pec}-emptiness complexity]\label{theorem:np-complete-complexity}
  The problem of deciding whether a \gls*{pec} is empty or not is coNP-complete.
\end{theorem}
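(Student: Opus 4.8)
The plan is to prove the two inclusions separately: that \gls*{pec}-emptiness lies in coNP, and that it is coNP-hard.

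For \emph{membership in coNP} I would argue that the complement---deciding whether $\mathsf{PEC}\langle x, Y\rangle$ is \emph{non}-empty---is in NP. A certificate is a single packet header $p$; its size is polynomial in the input, since $x$ already fixes the header width. To verify the certificate one checks $p \in x$ and $p \notin y$ for every $y \in Y$. Each membership test reduces to forming $\{p\} \cap z$ and comparing the result to $\{p\}$ (equivalently, testing $\mathsf{cardinality}(\{p\} \cap z) = 1$), using the intersection, equality and $\mathsf{cardinality}$ operators that every element type supports in polynomial time (\cref{subsec:element-types}). Hence non-emptiness is in NP and \gls*{pec}-emptiness is in coNP.

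For \emph{coNP-hardness} I would reduce from DNF-\textsc{Tautology} (equivalently, CNF-\textsc{Unsat}), a standard coNP-complete problem. Given a DNF formula $\phi = C_1 \vee \cdots \vee C_m$ over variables $X_1, \ldots, X_n$, map it to the instance $\langle x, Y \rangle$ where $x$ is the length-$n$ all-wildcard ternary bit vector (the element $\top$ denoting all $2^n$ assignments) and $Y = \{y_1, \ldots, y_m\}$ with $y_i$ the ternary bit vector that has a $1$ in position $j$ if $X_j$ occurs positively in $C_i$, a $0$ if it occurs negatively, and $\ast$ otherwise; a clause containing both a variable and its negation is discarded, since it denotes $\emptyset$ and contributes nothing to the union. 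A conjunction of literals is exactly a subcube of $\{0,1\}^n$, so each $y_i$ is a single legitimate TBV, and the construction is polynomial-time. By construction $\bigcup_i y_i$ is precisely the set of assignments satisfying $\phi$, hence $\mathsf{PEC}\langle x, Y\rangle = x - \bigcup_i y_i$ is the set of assignments that \emph{falsify} $\phi$, which is empty iff $\phi$ is a tautology; Figure~\ref{fig:dnf-hasse-diagram} illustrates this correspondence on a concrete DNF. To meet the normalization conventions of this appendix, note that ``all elements of $Y$ are subsets of $x$'' holds automatically because $x = \top$, and pairwise $\subseteq$-maximality is enforced by the polynomial-time preprocessor mentioned there, which only deletes subsumed elements and therefore leaves $\bigcup_i y_i$---and thus emptiness---unchanged.

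I expect the hardness direction to be essentially routine; the points that actually need care are (i) confirming that the element-type operations invoked by the verifier are genuinely polynomial-time, so that the coNP-membership argument is sound, and (ii) checking that the reduction's output can be brought into the subset/maximality form assumed in this appendix without altering the answer, which is exactly what the polynomial-time preprocessor accomplishes. Combining the two directions gives coNP-completeness.
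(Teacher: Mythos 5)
Your proof is correct and follows essentially the same route as the paper's: coNP-membership via a single packet-header witness for non-emptiness, and coNP-hardness by reducing DNF-\textsc{Tautology} to $\mathsf{PEC}\langle\top, V\rangle$ with exactly the same clause-to-TBV encoding ($1$/$0$/$\ast$ for positive/negative/absent literals). The extra care you take with the verifier's polynomial-time operations and the subset/maximality normalization is a reasonable elaboration of details the paper leaves implicit.
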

\begin{proof}
  To show that the \gls*{pec}-emptiness problem is coNP, it suffices to prove that deciding the non-emptiness of a \gls*{pec} is NP, which requires only a packet header as witness. To prove coNP-hardness, we proceed as follows.

  Let \textsc{TAUT} be the problem of deciding whether a given propositional logic formula in \emph{disjunctive normal form} (\emph{DNF}) is a tautology, a coNP-complete problem. To show that the \gls*{pec}-emptiness problem is coNP-complete, we reduce from \textsc{TAUT}. For this reduction, recall that a DNF formula is a disjunction of \emph{clauses}, each of which is a conjunction of \emph{literals} of the form $x$ or $\neg x$ for some Boolean variable $x$. Let $\phi$ be a DNF formula over the Boolean variables $x_1$, $x_2$, \ldots, $x_n$. For each clause $c_k$ in $\phi$, we can construct in polynomial time a ternary bit-vector $v_k = \langle y_1, y_2, \ldots, y_n \rangle$ where each ternary bit $y_i$ for $1 \leq i \leq n$ satisfies the following:
  \begin{displaymath}
    y_i = \begin{cases}
      1     &\mbox{if } \ell_i \in \phi \mbox{ and } \ell_i = x_i, \\
      0     &\mbox{if } \ell_i \in \phi \mbox{ and } \ell_i = \neg x_i, \\
      \ast  &\mbox{if } \phi \mbox{ contains no literal } \ell_i.
    \end{cases}
  \end{displaymath}
  Let $V$ be the set of ternary bit vectors induced by $\phi$, and define $\top$ to be the ternary bit vector of length $n$ where each bit is a wildcard, i.e., `$\ast$'. Then $\mathsf{PEC}{\langle \top,\,V \rangle}$ is empty if and only if $\phi$ is a tautology, proving coNP-hardness. We conclude that the \gls*{pec}-emptiness problem is coNP-complete.
\end{proof}

\begin{example}\label{example:np-complete-constr}
To illustrate the reduction in~\Cref{theorem:np-complete-complexity}, consider the following DNF formula $\phi$: $(x_1 \wedge x_3) \vee (x_1 \wedge x_2) \vee (x_2 \wedge \neg x_3)$. Each element $x$ in the associated Hasse diagram in~\Cref{fig:dnf-hasse-diagram} is also annotated with the cardinality of $\mathsf{PEC}{\langle x,\,C_x \rangle}$ where $C_x$ contains all direct children of $x$. Note that the \gls*{pec} that is associated with the top element ($\top$) is non-empty, which means that $\phi$ is not a tautology. We remark that the clause $x_1 \wedge x_2$ is redundant in $\phi$ in the sense that its removal does not change the truth values of $\phi$. This redundancy surfaces as an empty \gls*{pec} (namely, $\mathsf{PEC}{\langle 11\ast,\,\{111, 110\} \rangle}$), as denoted by `$11\ast\,\colon\,0$' in~\Cref{fig:dnf-hasse-diagram}. \qed
\end{example}

In the mode of operation where \sharppec\ counts packet headers in each \gls*{pec}, it is not difficult to see that the produced \gls*{pec}-cardinality information can answer the following \#P-hard counting problem (\#DNF): \emph{how many different variable assignments will satisfy a given formula in \emph{DNF}?} The following proof reduces \#DNF to the problem of counting the number of packet headers that are not matched by any of the input match conditions, proving the \#P-hardness of the \gls*{pec}-cardinality problem. The quantity in the last proof step is illustrated by the outermost gray area at outermost part of the Venn diagram in~\Cref{fig:venn}, where circles and differently colored regions denote input match conditions and \glspl*{pec}, respectively.

\begin{theorem}[Complexity of \gls*{pec}-cardinality]\label{theorem:complexity}
   Counting the packet headers in the disjunction of input match conditions is a \#P-hard problem.
\end{theorem}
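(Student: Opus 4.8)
The plan is to reduce from the \#P-hard counting problem \#DNF, building directly on the reduction already established in the proof of \Cref{theorem:np-complete-complexity}. Given a DNF formula $\phi$ over Boolean variables $x_1, \ldots, x_n$, I would reuse the exact same polynomial-time construction: for each clause $c_k$ of $\phi$, form the ternary bit vector $v_k$ whose $i$-th coordinate is $1$, $0$, or $\ast$ according to whether $\phi$'s clause $c_k$ contains the literal $x_i$, the literal $\neg x_i$, or neither. Let $V = \{v_1, \ldots, v_m\}$ be the resulting set of ternary bit vectors, and let $\top$ be the all-wildcard vector of length $n$, which represents the universe of $2^n$ packet headers. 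The key observation is that a packet header (i.e., an assignment in $\{0,1\}^n$) is matched by some $v_k$ if and only if the corresponding Boolean assignment satisfies clause $c_k$, hence is matched by \emph{some} $v_k$ in $V$ if and only if it satisfies $\phi$. Therefore the number of satisfying assignments of $\phi$ equals the number of packet headers in the disjunction $\bigcup_{v \in V} v$ of the input match conditions.

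The next step is to connect this quantity to what the theorem asks us to count. The theorem states that counting packet headers in the disjunction of the input match conditions is \#P-hard; the construction above shows precisely that this number equals $\#\mathrm{DNF}(\phi)$. Since the reduction is computable in polynomial time (one ternary bit vector per clause, each of length $n$) and \#DNF is \#P-hard, the \#P-hardness of counting packet headers in the disjunction of match conditions follows. If one wishes instead to phrase the result in terms of \gls*{pec}-cardinality (as the surrounding text suggests), note that $\bigl|\bigcup_{v \in V} v\bigr| = 2^n - |\mathsf{PEC}\langle \top,\, V\rangle|$, so the \gls*{pec}-cardinality of the complement determines, and is determined by, the count of satisfying assignments; thus the \gls*{pec}-cardinality problem is \#P-hard as well. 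This is the quantity depicted as the outermost gray region in the Venn diagram of \Cref{fig:venn}.

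The main thing to be careful about — rather than a deep obstacle — is ensuring the translation between Boolean assignments and packet headers is a genuine bijection that respects "matched." A ternary bit vector $v_k$ with a $\ast$ in position $i$ matches both $x_i = 0$ and $x_i = 1$, which is exactly the semantics of a clause omitting variable $x_i$; a $1$ or $0$ in position $i$ pins down the literal. One should also handle the degenerate cases cleanly: a clause containing both $x_i$ and $\neg x_i$ is unsatisfiable, and for such a clause the construction would need to produce an empty match condition (or one can assume without loss of generality, as is standard for \#DNF, that no clause is trivially contradictory). With these routine checks in place, the reduction goes through and establishes the \#P-hardness claim.
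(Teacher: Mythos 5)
Your proposal is correct and follows essentially the same route as the paper: both reduce from \#DNF using the identical clause-to-ternary-bit-vector construction from the coNP-completeness proof, and both exploit the identity that the number of headers in the union of the match conditions equals $2^n$ minus the cardinality of $\mathsf{PEC}\langle \top, V\rangle$. Your added remarks on the assignment-to-header bijection and on contradictory clauses are routine sanity checks that the paper leaves implicit, but they do not change the argument.
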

\begin{proof}
The proof proceeds by reduction from \#DNF:
\begin{enumerate}
  \item on input of a DNF formula $\phi$ over $n$ Boolean variables, convert each clause $c_k$ in $\phi$ to a $n$-length ternary bit vector $v_k$, as defined in~\Cref{theorem:np-complete-complexity};
  \item collect these $n$-length ternary bit vectors into set $V$;
  \item send $V$ to oracle to obtain the cardinality of $\mathsf{PEC}{\langle \top,\,V \rangle}$;
  \item subtract $\mathsf{PEC}{\langle \top,\,V \rangle}$'s cardinality from $2^n$.
\end{enumerate}
\end{proof}

\begin{figure}[b]
  \centering
    \begin{tikzpicture}[framed,rounded corners,background rectangle/.style={draw=black,fill=lightgray}]
        \begin{scope}
          \fill[preaction={fill, white},pattern=crosshatch,pattern color=red]\firstcircle;
          \fill[preaction={fill, white},pattern=north east lines,pattern color=green] \secondcircle;
          \fill[preaction={fill, white},pattern=crosshatch dots,pattern color=blue] \thirdcircle;
        \end{scope}
        \begin{scope}
          \clip \secondcircle;
          \fill[yellow] \firstcircle;
        \end{scope}
        \begin{scope}
          \clip \secondcircle;
          \fill[cyan] \thirdcircle;
        \end{scope}
        \begin{scope}
          \clip \firstcircle;
          \fill[magenta] \thirdcircle;
        \end{scope}
        \begin{scope}
          \clip \firstcircle;
          \clip \secondcircle;
          \fill[white] \thirdcircle;
        \end{scope}
        \draw \firstcircle node[text=black,above] {};
        \draw \secondcircle node [text=black,below left] {};
        \draw \thirdcircle node [text=black,below right] {};
    \end{tikzpicture}
      \caption{Venn diagram of three match conditions, denoted by circles, which collectively induce eight \glspl*{pec}, shown as differently shaded regions}
     \label{fig:venn}
  \end{figure}
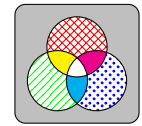%

\begin{example}
  We continue~\Cref{example:np-complete-constr}. Suppose the three clauses in the DNF formula $\phi$ represent match conditions. The number of packet headers in the disjunction of these match conditions is then $2^3 - 4 = 4$, since $\top\,\colon\,\mathbf{4}$ according to~\Cref{fig:dnf-hasse-diagram} where $\top = \ast\ast\ast$ matches any three bits. \qed
\end{example}

\section{Implementation Details}
\label{appendix:element-type}

In this appendix, we give more details regarding the implementation of element types. Since it is easy to implement {\texttt{ip\_prefix}}, {\texttt{optional<T>}} and {\texttt{tbv<N>}} as bit vectors, we focus our discussion on {\texttt{disjoint\_ranges}}, {\texttt{set<T>}} as well as \texttt{tuple}.

\begin{figure}
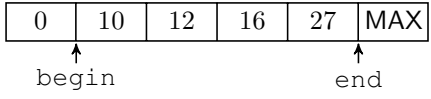

  \centering
  \disjointrangesfig{1}{5}{$0$
    \nodepart{two}$10$%
    \nodepart{three}$12$%
    \nodepart{four}$16$%
    \nodepart{five}$27$%
    \nodepart{six}$\mathsf{MAX}$%
   }%
  \caption{Internals of \texttt{disjoint\_ranges} where the $\mathsf{MAX}$ constant corresponds to the maximal upper bound of any range}\label{fig:disjoint-ranges}
\end{figure}

Firstly, {\texttt{disjoint\_ranges}} internally represents boundaries of half-closed intervals as a sorted array of numbers. An invariant of {\texttt{disjoint\_ranges}} is that the beginning and end of the underlying array contain the smallest (i.e., zero) and largest (i.e., $\mathsf{MAX}$) representable number, respectively. By adjusting an internal offsets for delimiting the array bounds, this ensures that negation on {\texttt{disjoint\_ranges}} is a constant-time operation, which allows us to efficiently represent complements of, say, arbitrary TCP/IP port range.

For example,~\Cref{fig:disjoint-ranges} shows the internals of the disjoint set of half-closed intervals $\{ [10:12), [16:27) \}$, whereas the set $\{ [0:10), [12:16), [27:\mathsf{MAX}) \}$ corresponds to its negation by adjusting the \texttt{begin} and \texttt{end} delimiters accordingly. As expected, the cardinality of {\texttt{disjoint\_ranges}} is merely the sum of its constituent ranges.

Similar to TBVs in \ddnf, we implement \texttt{set<T>} as a `bitset', i.e., a compact heap-allocated array of machine words which are manipulated via bitwise operators. As a result, set intersections, subset checks, cardinality computation (i.e., number of set bits), and negations are performed very efficiently. The length of \texttt{set<T>}'s underlying bitset is equal to the number of distinct values of \texttt{T} used by all the input sets plus an additional bit representing all values not explicitly used in the inputs (in case at least one such value exists).

By standard point-wise extension~\cite{DP2002}, all tuples constructed from the other element types in~\Cref{fig:element-type} also form partially ordered sets. For example, if $x$ and $y$ are 3-tuples that represent the match conditions in~\Cref{fig:lattice-elements}, then $x \subseteq y$ if and only if the following coordinate-wise subset inclusions hold: (i)~$\mathsf{source}(x) \subseteq \mathsf{source}(y)$, (ii)~$\mathsf{destination}(x) \subseteq \mathsf{destination}(y)$ and (iii)~$\mathsf{protocol}(x) \subseteq \mathsf{protocol}(y)$.

We remark that the cardinality operator on a $k$-tuple is computed by multiplying the cardinalities of all its $k$ elements, which can be done efficiently by using standard arbitrary-precision integers, where  folding expressions for tuples, as in C++17~\cite{CPP17}, allow for effective optimizations.

\section{Proof: Minimality of PECs}
\label{appendix:proof-of-minimality}

In this subsection, we give the proof of the `Optimality of \sharppec' theorem in~\Cref{subsec:minimality}:

\begin{proof}[Proof]
Recall that \#PEC efficiently detects all empty PECs using its counting method. It is easy to see that this set of non-empty PECs computed by \#PEC satisfies the first four conditions of the definition of atomic predicates (\Cref{def:atomic-predicates}). 

It remains to show that the set of non-empty PECs satisfies condition 5 (minimality). We write `YL' for Yang and Lam's Algorithm 3 in~\cite{YL2013}. Fix $\mathfrak{M}$ to be a set of predicates input to YL and \#PEC. Since YL generates atomic predicates~\cite{YL2013}, which are minimal by definition, it suffices to show that the predicates represented by every non-empty PEC produced by \#PEC is in the set of atomic predicates generated by YL.

By a simple induction on $n$, it is easy to show that for a set of input predicates $\mathfrak{M} = \{P_1, P_2, ..., P_n\}$, each predicate of the form $\phi = X_1 \land X_2 \land ... \land X_n \neq \bot$ where $X_i \in \{P_i, \neg P_i\}$ for $1 \leq i \leq n$ is in the set of atomic predicates generated by YL. Therefore, due to the uniqueness of non-empty PECs generated by \#PEC (a corollary of condition 3), it suffices to show that each non-empty PEC generated by \#PEC for input $\mathfrak{M}$ represents a predicate of the above conjunctive form.

Let $n$ be a DAG node whose cardinality is non-zero. This means that $n$ represents a PEC (i.e.,  $\mathsf{PEC}(n) \triangleq n.\mathit{elem} - \bigvee_{c \in n.\mathit{children}}{c.\mathit{elem}}$) that is non-empty. For each DAG node $m$, let $A(m)$ be the set of its ancestors (including $m$) in the DAG produced by \#PEC. We claim that the predicate represented by $\mathsf{PEC}(n)$ is equivalent to (written ``$\equiv$'') $\bigwedge\{P_i \in \mathfrak{M} \colon \exists a \in A(n) \colon a.\mathit{elem} = P_i\} - \bigvee\{P_i \in \mathfrak{M} \colon \forall a' \in A(n) \colon a'.\mathit{elem} \not= P_i\}$, which is of the above form $\phi$ after distributing negation through disjunction.

To prove our claim, first note that $\mathsf{PEC}(n) \equiv n.\mathit{elem} - \bigvee_{d \in D(n)}{d.\mathit{elem}}$ where $D(n)$ is the strict subtree of $n$, i.e., $D(n)$ contains all nodes $n'$ such that $n'.\mathit{elem} \subset n.\mathit{elem}$.

Second, note that for any node $m$, $\forall a \in A(m) \colon m.\mathit{elem} \subseteq a.\mathit{elem}$, since the edges in the DAG represent subset-inclusion. So $n.\mathit{elem} \equiv \bigwedge_{a \in A(n)}{a.\mathit{elem}}$. But note that for any ancestor $a$ in $A(n)$ such that $a.\mathit{elem} \not\in \mathfrak{M}$ (i.e., $a$ is created as a result of closure under intersection), we can replace $a$ with the conjunction of the elements of $A(a)$, which are still in $A(n)$. We can repeat this (finite) process until we get a conjunction that only comprises DAG nodes $a \in A(n)$ such that $a.\mathit{elem} \in \mathfrak{M}$, i.e., we can express $n.\mathit{elem}$ through a conjunction of DAG nodes whose elements are in the set of input predicates. Thus $n.\mathit{elem} \equiv \bigwedge\{P_i \in \mathfrak{M} \colon \exists a \in A(n) \colon a.\mathit{elem} = P_i\}$.

Now let $n' \not\in (A(n) \cup D(n))$ be arbitrary. In other words, $n'$ is a DAG node that is neither an ancestor nor descendant of $n$ and, by definition of $A(n)$, $n' \not= n$. Note that $\exists d_{n'} \in D(n) \colon n.\mathit{elem} \land n'.\mathit{elem} \equiv d_{n'}.\mathit{elem}$, because \sharppec's lattice is closed under intersection. Hence, $n.\mathit{elem} - d_{n'}.\mathit{elem} \equiv n.\mathit{elem} - (d_{n'}.\mathit{elem} \lor n'.\mathit{elem})$. Thus, $n.\mathit{elem} - \bigvee_{d \in D(n)}{d.\mathit{elem}} \equiv n.\mathit{elem} - (\bigvee_{d \in D(n)}{d.\mathit{elem}} \lor \bigvee_{n' \not\in (A(n) \cup D(n))}{n'.\mathit{elem}}) \equiv n.\mathit{elem} - \bigvee_{a'\not\in A(n)}{a'.\mathit{elem}}$. 
Note that for any $a' \not\in A(n)$ there exists $a'' \in A(a') - A(n)$ such that $a''.\mathit{elem} \in \mathfrak{M}$: consider $b \in A(a') - A(n)$ such that all of its immediate parents are in $A(n)$. Note that such node exists because the root of the DAG is in $A(n)$. It must be the case that $b.\mathit{elem} \in \mathfrak{M}$, otherwise $b.\mathit{elem} \equiv p_1.\mathit{elem} \land p_2.\mathit{elem}$ where $p_1,p_2 \in A(n)$ which in turn means $b \in A(n)$ (a contradiction). So we can set $a'' = b$. Note that $a'.\mathit{elem} \lor a''.\mathit{elem} \equiv a''.\mathit{elem}$. By application of this observation to all $a' \not\in A(n)$ we get $\bigvee_{a'\not\in A(n)}{a'.\mathit{elem}} \equiv \bigvee\{P_i \in \mathfrak{M} \colon \exists a' \not\in A(n) \colon a'.\mathit{elem} = P_i\}  \equiv \bigvee\{P_i \in \mathfrak{M} \colon \forall a' \in A(n) \colon a'.\mathit{elem} \not= P_i\}$.    

Putting all together, $\mathsf{PEC}(n)$ can be re-written as $n.\mathit{elem} - \bigvee_{a'\not\in A(n)}{a'.\mathit{elem}} \equiv  \bigwedge\{P_i \in \mathfrak{M} \colon \exists a \in A(n) \colon a.\mathit{elem} = P_i\} - \bigvee\{P_i \in \mathfrak{M} \colon \forall a' \in A(n) \colon a'.\mathit{elem} \not= P_i\}$.
\end{proof}

\section{Detailed Experimental Results}
\label{appendix:evaluation}

This section details our experimental results, including run-time and memory usage of the different \gls*{pec}-emptiness solutions. In addition, our experiments compare \sharppec\ to both APV and \ddnf.

\begin{figure*}[t]
\centering
\scalebox{0.65}{
\begin{tabular}{@{}lllllllllllllll@{}}
\toprule
Dataset            & Insertions & PECs      & \begin{tabular}[c]{@{}l@{}}Empty\\ PECs\end{tabular} & \begin{tabular}[c]{@{}l@{}}Atomic\\ Preds.\end{tabular} & \multicolumn{2}{l|}{\begin{tabular}[c]{@{}l@{}}PEC-construction \\ time (s)\end{tabular}} & \multicolumn{3}{l}{PEC-emptiness check (s)} & APV (s) & \multicolumn{4}{c}{Memory (MB)} \\ \cmidrule(lr){6-10} \cmidrule(l){12-15}
                   &            &           &                                                      &                                                         & Z3 ddNF                           & \multicolumn{1}{l|}{\#PEC}                    & BDD       & SAT       & Card.              &         & BDD   & SAT   & Card. & APV     \\ \midrule
REANNZ-IP          & 1,159      & 1,160     & 25                                                   & 1,135                                                   & \textless{}1ms                    & \textless{}1ms                                & 0.016     & 0.414     & \textless{}1ms     & 0.001   & 6     & 6     & 3     & 5       \\
REANNZ-Full        & 1,170      & 12,783    & 275                                                  & 12,508                                                  & 0.112                             & 0.009                                         & 2         & 9         & 0.018              & 3       & 14    & 26    & 9     & 10      \\
Azure-DC           & 2,942      & 5,096,869 & 10,450                                               & 5,086,419                                               & 3301                              & 121                                           & 20112     & 47829     & 30                 & 25669   & 4,429 & 5,797 & 2,365 & 2,517   \\
Berkeley-IP        & 584,944    & 584,945   & 29,813                                               & Timeout                                                 & Timeout                           & 2709                                          & 1553      & 460       & 0.515              & Timeout & 302   & 701   & 227   & Timeout \\
Stanford-IP/bbra   & 1,825      & 1,825     & 10                                                   & 1,815                                                   & \textless{}1ms                    & \textless{}1ms                                & 0.019     & 0.657     & \textless{}1ms     & 0.041   & 6     & 7     & 3     & 5       \\
Stanford-IP/bbrb   & 1,620      & 1,620     & 8                                                    & 1,612                                                   & \textless{}1ms                    & \textless{}1ms                                & 0.017     & 0.566     & \textless{}1ms     & 0.033   & 6     & 7     & 3     & 5       \\
Stanford-IP/boza   & 1,614      & 1,614     & 3                                                    & 1,611                                                   & \textless{}1ms                    & \textless{}1ms                                & 0.018     & 0.582     & \textless{}1ms     & 0.039   & 6     & 7     & 3     & 5       \\
Stanford-IP/bozb   & 1,453      & 1,453     & 2                                                    & 1,451                                                   & \textless{}1ms                    & \textless{}1ms                                & 0.017     & 0.521     & \textless{}1ms     & 0.033   & 6     & 6     & 3     & 5       \\
Stanford-IP/coza   & 184,909    & 184,909   & 4,840                                                & 180,069                                                 & 471                               & 334                                           & 7         & 82        & 0.122              & 4911    & 102   & 252   & 69    & 49      \\
Stanford-IP/cozb   & 183,376    & 183,376   & 4,840                                                & 178,536                                                 & 465                               & 327                                           & 15        & 83        & 0.121              & 4924    & 100   & 252   & 68    & 49      \\
Stanford-IP/goza   & 1,767      & 1,767     & 1                                                    & 1,766                                                   & \textless{}1ms                    & \textless{}1ms                                & 0.021     & 0.639     & \textless{}1ms     & 0.045   & 6     & 7     & 3     & 5       \\
Stanford-IP/gozb   & 1,669      & 1,669     & 1                                                    & 1,668                                                   & \textless{}1ms                    & \textless{}1ms                                & 0.02      & 0.603     & \textless{}1ms     & 0.041   & 6     & 7     & 3     & 5       \\
Stanford-IP/poza   & 1,489      & 1,489     & 1                                                    & 1,488                                                   & \textless{}1ms                    & \textless{}1ms                                & 0.017     & 0.532     & \textless{}1ms     & 0.033   & 6     & 6     & 3     & 5       \\
Stanford-IP/pozb   & 1,434      & 1,434     & 1                                                    & 1,433                                                   & \textless{}1ms                    & \textless{}1ms                                & 0.017     & 0.514     & \textless{}1ms     & 0.032   & 6     & 6     & 3     & 5       \\
Stanford-IP/roza   & 1,567      & 1,567     & 2                                                    & 1,565                                                   & \textless{}1ms                    & \textless{}1ms                                & 0.018     & 0.57      & \textless{}1ms     & 0.039   & 6     & 7     & 3     & 5       \\
Stanford-IP/rozb   & 1,483      & 1,483     & 1                                                    & 1,482                                                   & \textless{}1ms                    & \textless{}1ms                                & 0.017     & 0.531     & \textless{}1ms     & 0.034   & 6     & 6     & 3     & 5       \\
Stanford-IP/soza   & 184,682    & 184,682   & 4,841                                                & 179,841                                                 & 471                               & 347                                           & 7         & 82        & 0.119              & 4951    & 102   & 251   & 69    & 49      \\
Stanford-IP/sozb   & 180,944    & 180,944   & 4,841                                                & 176,103                                                 & 443                               & 315                                           & 9         & 83        & 0.12               & 4711    & 99    & 250   & 68    & 48      \\
Stanford-IP/yoza   & 4,746      & 4,746     & 3                                                    & 4,743                                                   & \textless{}1ms                    & \textless{}1ms                                & 0.076     & 2         & 0.002              & 2       & 8     & 9     & 4     & 6       \\
Stanford-IP/yozb   & 2,592      & 2,592     & 1                                                    & 2,591                                                   & \textless{}1ms                    & \textless{}1ms                                & 0.036     & 0.969     & 0.001              & 0.303   & 6     & 7     & 3     & 5       \\
Stanford-IP/All    & 197,828    & 197,828   & 4,874                                                & 192,954                                                 & 266                               & 199                                           & 19        & 89        & 0.156              & 5149    & 122   & 265   & 85    & 53      \\
Stanford-Full/bbra & 918        & 43,450    & 0                                                    & 43,450                                                  & 2                                 & 0.361                                         & 2         & 20        & 0.049              & 12      & 39    & 38    & 20    & 26      \\
Stanford-Full/bbrb & 861        & 16,017    & 0                                                    & 16,017                                                  & 0.221                             & 0.001                                         & 0.552     & 7         & 0.019              & 3       & 17    & 17    & 9     & 12      \\
Stanford-Full/boza & 316        & 23,230    & 0                                                    & 23,230                                                  & 0.076                             & \textless{}1ms                                & 0.91      & 11        & 0.018              & 3       & 23    & 23    & 12    & 17      \\
Stanford-Full/bozb & 286        & 19,662    & 0                                                    & 19,662                                                  & 0.043                             & \textless{}1ms                                & 0.719     & 9         & 0.014              & 2       & 20    & 20    & 10    & 15      \\
Stanford-Full/coza & 417        & 14,120    & 0                                                    & 14,120                                                  & 0.032                             & \textless{}1ms                                & 0.547     & 7         & 0.02               & 2       & 16    & 18    & 9     & 12      \\
Stanford-Full/cozb & 346        & 9,200     & 0                                                    & 9,200                                                   & 0.029                             & \textless{}1ms                                & 0.346     & 4         & 0.011              & 1       & 11    & 13    & 6     & 9       \\
Stanford-Full/goza & 326        & 26,396    & 0                                                    & 26,396                                                  & 0.149                             & 0.016                                         & 1         & 12        & 0.021              & 3       & 26    & 26    & 14    & 19      \\
Stanford-Full/gozb & 306        & 23,202    & 0                                                    & 23,202                                                  & 0.11                              & 0.014                                         & 1         & 10        & 0.017              & 2       & 23    & 22    & 11    & 17      \\
Stanford-Full/poza & 243        & 14,883    & 0                                                    & 14,883                                                  & 0.017                             & \textless{}1ms                                & 0.532     & 7         & 0.011              & 1       & 16    & 16    & 9     & 12      \\
Stanford-Full/pozb & 230        & 13,284    & 0                                                    & 13,284                                                  & 0.002                             & \textless{}1ms                                & 0.473     & 6         & 0.009              & 1       & 15    & 15    & 8     & 11      \\
Stanford-Full/roza & 181        & 7,933     & 0                                                    & 7,933                                                   & \textless{}1ms                    & \textless{}1ms                                & 0.254     & 3         & 0.006              & 0.567   & 11    & 11    & 5     & 8       \\
Stanford-Full/rozb & 166        & 6,930     & 0                                                    & 6,930                                                   & 0.001                             & \textless{}1ms                                & 0.216     & 3         & 0.005              & 0.421   & 10    & 10    & 5     & 8       \\
Stanford-Full/soza & 524        & 16,764    & 81                                                   & 16,683                                                  & 0.056                             & \textless{}1ms                                & 0.668     & 9         & 0.024              & 2       & 18    & 19    & 10    & 13      \\
Stanford-Full/sozb & 355        & 9,238     & 64                                                   & 9,174                                                   & 0.028                             & \textless{}1ms                                & 0.333     & 4         & 0.011              & 0.828   & 11    & 12    & 6     & 9       \\
Stanford-Full/yoza & 507        & 60,363    & 231                                                  & 60,132                                                  & 5                                 & 0.17                                          & 4         & 38        & 0.17               & 20      & 46    & 65    & 31    & 28      \\
Stanford-Full/yozb & 353        & 27,313    & 208                                                  & 27,105                                                  & 0.97                              & 0.001                                         & 2         & 16        & 0.066              & 6       & 23    & 33    & 15    & 14      \\
Stanford-Full/All  & 2,732      & 1,176,095 & 48,906                                               & 1,127,189                                               & 560                               & 28                                            & 692       & 1958      & 4                  & 2314    & 895   & 1,077 & 544   & 439     \\ \bottomrule
\end{tabular}
}
\caption{Evaluation results for datasets that encode match conditions as ternary bit vectors}\label{fig:evaluation-results-appendix}
\end{figure*}

\begin{figure*}[]
\centering
\scalebox{0.65}{
\begin{tabular}{@{}llllllllllllll@{}}
\toprule
Dataset    & Insertions & PECs      & \begin{tabular}[c]{@{}l@{}}Empty\\ PECs\end{tabular} & \begin{tabular}[c]{@{}l@{}}Atomic \\ Preds.\end{tabular} & \begin{tabular}[c]{@{}l@{}}PEC-construction\\ time (s)\end{tabular} & \multicolumn{3}{l}{PEC-emptiness check (s)} & APV (s) & \multicolumn{4}{c}{Memory (MB)} \\ \cmidrule(lr){7-9} \cmidrule(l){11-14}
           &            &           &                                                      &                                                          &                                                              & BDD       & SAT       & Card.              &         & BDD    & SAT    & Card.  & APV  \\ \midrule
Diekmann/A & 45         & 66        & 6                                                    & 60                                                       & 0.003                                                        & 0.006     & 0.418     & \textless{}1ms     & 0.004   & 5      & 11     & 2      & 4    \\
Diekmann/B & 51         & 58        & 3                                                    & 55                                                       & 0.002                                                        & 0.005     & 0.215     & \textless{}1ms     & 0.003   & 5      & 12     & 2      & 4    \\
Diekmann/C & 31         & 92        & 0                                                    & 92                                                       & 0.001                                                        & 0.005     & 0.333     & \textless{}1ms     & 0.008   & 5      & 12     & 1      & 4    \\
Diekmann/D & 262        & 3,630     & 0                                                    & 3,630                                                    & 0.048                                                        & 0.568     & 14        & 0.014              & 1       & 14     & 23     & 11     & 8    \\
Diekmann/E & 98         & 344       & 0                                                    & 344                                                      & 0.004                                                        & 0.021     & 1         & 0.002              & 0.039   & 6      & 13     & 2      & 4    \\
Diekmann/F & 5,317      & 888,652   & 40                                                   & 888,612                                                  & 64                                                           & 408       & 4799      & 11                 & 2988    & 7,436  & 7,448  & 7,517  & 608  \\
Diekmann/G & 5,321      & 889,646   & 40                                                   & 889,606                                                  & 39                                                           & 413       & 4729      & 10                 & 2385    & 3,843  & 3,854  & 3,924  & 608  \\
Diekmann/H & 5,463      & 919,353   & 56                                                   & 919,297                                                  & 40                                                           & 424       & 5005      & 11                 & 2547    & 3,911  & 3,923  & 3,995  & 628  \\
Diekmann/I & 5,426      & 908,849   & 56                                                   & 908,793                                                  & 40                                                           & 417       & 4799      & 11                 & 2478    & 3,865  & 3,877  & 3,948  & 622  \\
Diekmann/J & 6,004      & 1,058,897 & 56                                                   & 1,058,841                                                & 71                                                           & 486       & 5654      & 13                 & 2936    & 4,558  & 4,573  & 4,656  & 700  \\
Diekmann/K & 3,242      & 400,911   & 257                                                  & 400,654                                                  & 18                                                           & 157       & 2084      & 3                  & 732     & 1,997  & 2,006  & 2,031  & 233  \\
Diekmann/L & 3,724      & 433,399   & 198                                                  & 433,201                                                  & 19                                                           & 174       & 2284      & 3                  & 921     & 2,200  & 2,209  & 2,236  & 262  \\
Diekmann/M & 136        & 426       & 0                                                    & 426                                                      & 0.009                                                        & 0.021     & 1         & 0.003              & 0.028   & 5      & 12     & 2      & 4    \\
Diekmann/N & 136        & 418       & 0                                                    & 418                                                      & 0.006                                                        & 0.021     & 1         & 0.002              & 0.027   & 5      & 12     & 2      & 4    \\
Diekmann/O & 569        & 314,160   & 0                                                    & 314,160                                                  & 30                                                           & 100       & 1149      & 2                  & 345     & 912    & 921    & 937    & 245  \\
Diekmann/P & 578        & 492,378   & 4                                                    & 492,374                                                  & 47                                                           & 168       & 1837      & 4                  & 635     & 1,563  & 1,573  & 1,606  & 324  \\
Diekmann/Q & 307        & 4,626     & 38                                                   & 4,588                                                    & 0.087                                                        & 0.763     & 17        & 0.016              & 0.94    & 21     & 29     & 18     & 7    \\
Diekmann/R & 36         & 85        & 0                                                    & 85                                                       & \textless{}1ms                                               & 0.006     & 0.311     & \textless{}1ms     & 0.013   & 5      & 12     & 1      & 4    \\
Diekmann/S & 332        & 792       & 0                                                    & 792                                                      & 0.014                                                        & 0.023     & 3         & 0.004              & 0.167   & 7      & 14     & 4      & 4    \\
Diekmann/T & 2,343      & 8,878     & 0                                                    & 8,878                                                    & 0.132                                                        & 0.341     & 33        & 0.076              & 11      & 48     & 55     & 46     & 8    \\
Diekmann/U & 73         & 93        & 0                                                    & 93                                                       & 0.002                                                        & 0.004     & 0.334     & \textless{}1ms     & 0.002   & 5      & 12     & 2      & 5    \\
Diekmann/V & 43         & 65        & 0                                                    & 65                                                       & 0.001                                                        & 0.005     & 0.236     & \textless{}1ms     & 0.007   & 5      & 12     & 2      & 4    \\
Diekmann/W & 35         & 34        & 0                                                    & 34                                                       & 0.001                                                        & 0.003     & 0.125     & \textless{}1ms     & 0.002   & 5      & 11     & 1      & 4    \\
Diekmann/X & 92         & 78        & 0                                                    & 78                                                       & 0.002                                                        & 0.006     & 0.265     & \textless{}1ms     & 0.008   & 5      & 12     & 2      & 4    \\ \bottomrule
\end{tabular}
}
\caption{Evaluation results for iptables rule-sets}\label{fig:diekmann-evaluation-results-appendix}
\end{figure*}

\end{appendices}

\end{document}